\newcommand{\truth}{\top}
\newcommand{\falsity}{\bot}
\newcommand{\atoms}{AP}
\newcommand{\tick}{\surd}
\newcommand{\cross}{\times}
\newcommand{\prune}{PRUNE}
\newcommand{\prunez}{PRUNE$_0$}
\newcommand{\ignore}[1]{}
\newcommand{\longversion}[1]{}
\newcommand{\shortversion}[1]{#1}
 \newcommand{\webpage}{\url{http://staffhome.ecm.uwa.edu.au/~00054620/research/Online/ltlsattab.html}}
 \newcommand{\techreport}{DBLP:journals/corr/Reynolds16}
 \newcommand{\ijcai}{DBLP:conf/ijcai/BertelloGMR16}
 \newenvironment{lemma}{LEMMA:}{}
 \newenvironment{proof}{PROOF:}{} 
\title{A New Rule for LTL Tableaux}
\author{Mark Reynolds
\institute{The University of Western Australia\thanks{The work was partially supported
by the Australian Research Council.
The author would also like to 
thank staff, students
and Prof. Angelo Montanari 
of Udine,
for interesting and useful discussions
on the development of this approach:
and a chance to check
how intuitive the new approach is.
}\\
Perth, Australia}
\email{mark.reynolds@uwa.edu.au}
}
\begin{document}
\maketitle

\begin{abstract}
Propositional linear time temporal logic (LTL) is the standard temporal logic for computing applications and many reasoning techniques and tools have been developed for it. Tableaux for deciding satisfiability have existed since the 1980s. However, the tableaux for this logic do not look like traditional tree-shaped tableau systems and their processing is often quite complicated. 

In this paper, 
we introduce a novel style of tableau rule
which supports a new simple traditional-style tree-shaped tableau for LTL.
We prove that it is sound and complete. As well as being simple to understand, to introduce to students and to use, it is also simple to implement and is competitive against state of the art systems. 
It is particularly suitable for parallel implementations.
\end{abstract}


\section{Introduction}
\label{sec:intro}

Propositional linear time temporal logic,
LTL, is important for hardware and software verification \cite{DBLP:conf/spin/RozierV07}.
LTL satisfiability checking (LTLSAT) is receiving renewed interest
with advances in computing power, several industry ready tools,
some new theoretical techniques,
studies of the relative merits of different approaches,
implementation competitions, and benchmarking:
\cite{Goranko2010113,VSchuppanLDarmawan-ATVA-2011,DBLP:conf/spin/RozierV07}.
Common techniques
include
automata-based approaches 
\cite{VaW94}
and
resolution
\cite{DBLP:journals/aicom/LudwigH10}
as well as tableaux \cite{Gou89,Wol85,DBLP:conf/cav/KestenMMP93}.
Each type of approach
has its own advantages and disadvantages
and each can be competitive
at the industrial scale
(albeit within the limits of what may
be achieved with PSPACE complexity).
State of the art systems
such as \cite{RV11}
often incorporate 
a variety of previous approaches working in parallel
and thus
piggy-back on the
fastest existing tableaux and fastest
versions of other approaches.

Traditionally tableaux 
present as tree-shaped
\cite{Beth55,Smu68,Girle00}.
For temporal logic, tableaux tend to be
{\em declarative}, which 
means that the
definition of a node
is as a set of formulas
(so a particular set appears 
at most once), and the successor relation
is determined by those formulas;
and they tend to be graph-shaped.
Figure~\ref{fig:cgh97tab}
shows a typical graph-style tableau,
a more general {\em graph} rather than tree-shaped
(this from \cite{CGH97}).
In general, 
these tableau constructions
need the whole graph to be present
 before
 a second phase of discarding nodes takes place.


\begin{figure}
\centering
\begin{minipage}{.5\textwidth}
  \centering
\includegraphics[width=6cm,trim= 4cm 8.5cm 4cm 9cm,clip=true]{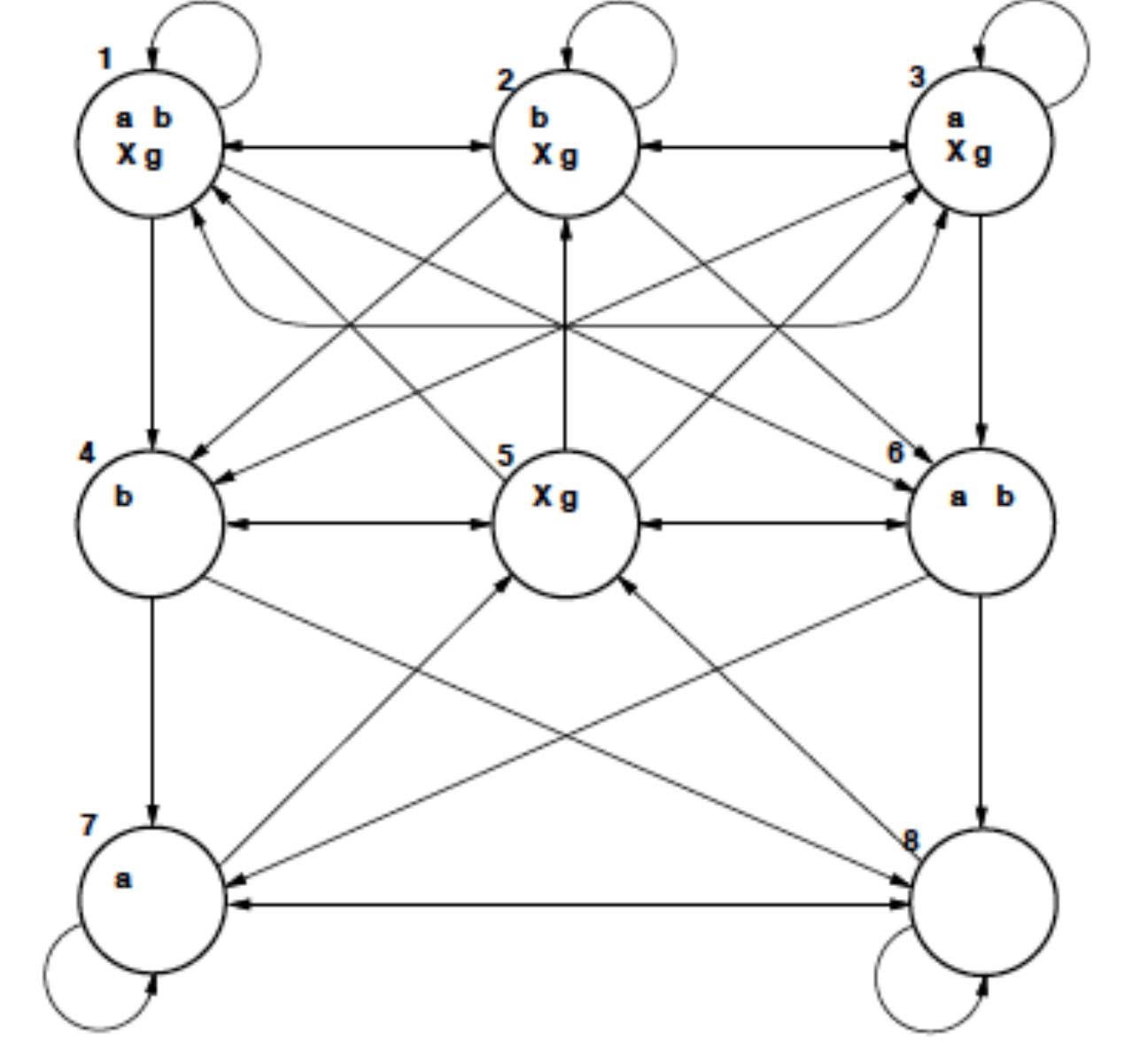}
  \captionof{figure}{A graph-shaped tableau for $g= a U b$ from \cite{CGH97}}
  \label{fig:cgh97tab}
\end{minipage}%
\begin{minipage}{.5\textwidth}
  \centering
\includegraphics[width=6cm,trim= 4cm 11cm 4cm 11cm,clip=true]{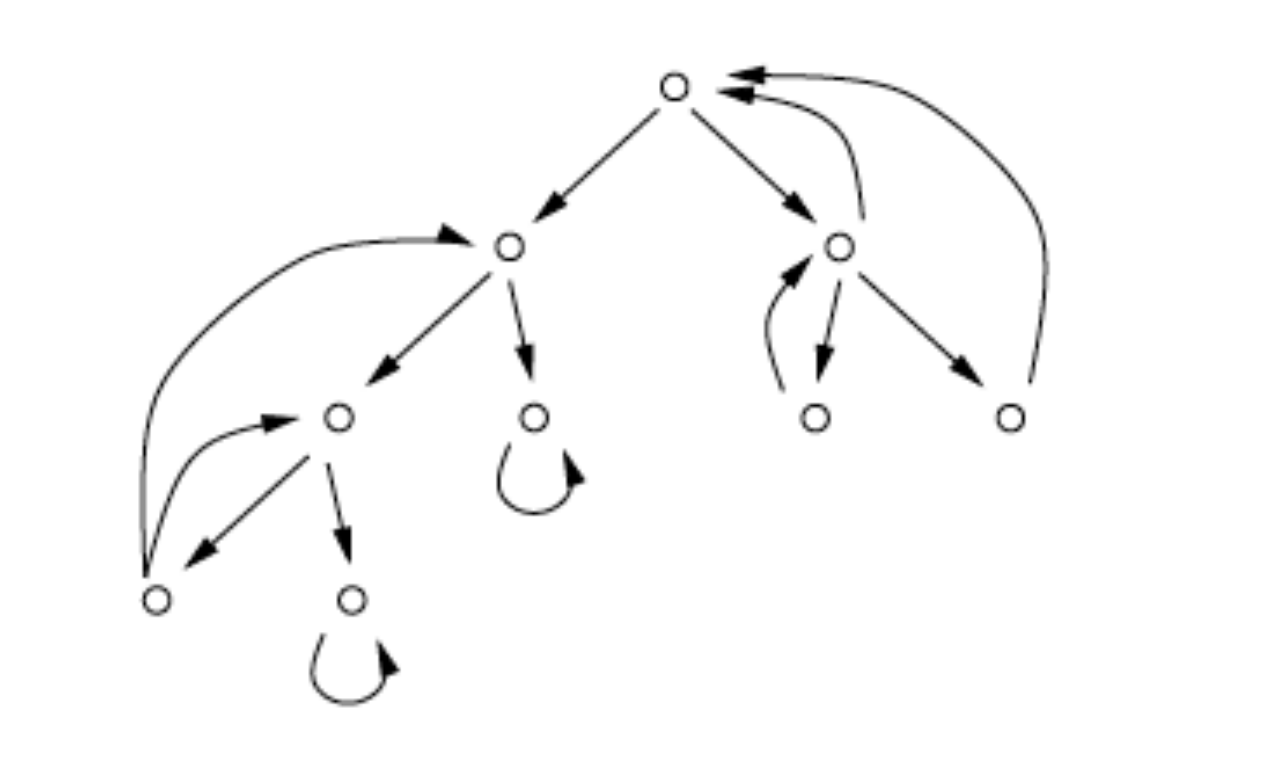}
  \captionof{figure}{A tree-shaped tableau from \cite{Sch98}}
  \label{fig:schwetab}
\end{minipage}
\end{figure}

Figure~\ref{fig:schwetab}
shows the tree-style tableau
from \cite{Schwe98}
which is unusual:
now nodes are independent objects arranged in a tree
and although they are labelled by sets of formulas,
the same label may appear on different nodes.
This 
 stands out
 in being tree-shaped (not a more general graph),
 and in being one-pass, i.e.,
 not relying on a two-phase building and pruning process.
 It also stands out in speed (i.e., faster than other tableau approaches)
 \cite{Goranko2010113} 
 and is thus the state of the art in tableau
 approaches to LTL.
 Construction of tree-shaped tableaux
 can be local as we do not need to check if the
 same label has already been constructed 
 somewhere else.
 However,
 there are still elements
 of communication between separate branches
 and a slightly complicated
 annotation of nodes with depth measures that
 needs to be managed
 as it feeds in to the tableau rules.
So it is not in the traditional 
style of classical tableaux \cite{Mor12}.

The following is a new simpler tableau for LTL.
While
many of the rules are unsurprising 
given earlier LTL tableaux
\cite{Wol85,Schwe98},
the new rule to curtail the construction
of repetitive loops is interesting.
Loop checking is a common concern in tableau 
approaches modal logics and there have been 
many strange rules proposed.
There is a loop checker (for a branching logic) in \cite{BMP83}
which cuts construction after a label appears three times:
but this would cause incompleteness in LTL.
For logic S4, \cite{HSZ96}
require that boxes must accumulate.
For LTL we have loop checkers
in \cite{DBLP:journals/entcs/GaintzarainHLN08}
and \cite{BrL08}
but these have not proven practical 
because they involve collecting unwieldy annotations
(of conjuncts from a doubly exponentially-sized
closure set)
or sets of sets of contextual formulas
(see \cite{Wid10} for a
more detailed account).
The new PRUNE rule presented here
is instead simple and very intuitive:
don't return to a label
for a third time unless
there is progress on fulfilling eventualities
after the second time.

Perhaps the most similar previous work 
is that in 
\cite{BEM96}
which presents a tableau for
linear time $\mu$-calculus
in which there are several rules
operating on the sequence of labels in a branch.
One rule called (v)
classifies a branch as successful if 
a label is repeated three times with the same
fixed point being unwound first in each
intervening interval.
This corresponds roughly to
having the same eventuality postponed first
in each interval: so not quite as
easy to define or check in temporal logic.
As far as the author knows this tableau
has neither been implemented, nor translated to work with LTL.

Thus
the PRUNE rule is completely novel and a surprisingly simple way to curtail
repetitive branch extension. It may be applicable
in other contexts.
The overall tableau is thus novel
and it is
unique in that it is wholly traditional in style (labels are sets
of formulas), tree shaped tableau construction,
no extraneous recording of calculated values, contexts
or annotations, just looking at the labels.
It is also unique in allowing
separate parallel development of branches.

Because of the tree shape, the tableau search
allows completely independent
searching down separate branches
and so lends itself to parallel
computing.
In fact, this approach is
``embarrassingly parallel''
\cite{Fos95}.
Thus there is also potential for quantum
implementations.
Developing a parallel implementation is ongoing work,
however:
even though only formula set labels need to be recorded
down a branch,
there is a need to 
work out an efficient way for 
memory of the sequence of labels to be managed
below nodes where two branches separate.


Solid experimental work in 
\cite{\ijcai}
(see section~\ref{sec:complex})
shows that the new tableau is competitive
but that is not the task of this paper.

The interesting completeness and termination of the tableau search is
proved here. The proofs are mostly straightforward.
However, the completeness proof with the PRUNE rule
is interesting, new and quite complicated.
The main task of this paper is to present that
proof.

In this paper
we briefly remind the reader of the well-known syntax and semantics for LTL
in Section~\ref{sec:synsem},
describe our tableau approach 
in 
general terms
in Section~\ref{sec:tab},
present the rules in Section~\ref{sec:rules},
comment on the use of the rules
and provide some motivation for our approach
in Section~\ref{sec:motiv},
\longversion{
prove soundness section~\ref{sec:sound},
prove completeness section~\ref{sec:complete},
}
\shortversion{
discuss soundness and
prove completeness in Section~\ref{sec:correctness},
}
and briefly discuss
 implementation issues in Section~\ref{sec:complex},
before a conclusion
in Section~\ref{sec:concl}.
Full versions of the (short) proofs can be found
in an online technical report
\cite{\techreport}.

\section{Syntax and Semantics}
\label{sec:synsem}

We assume a countable set $\atoms$
of propositional atoms,
or atomic propositions.

A (transition) structure
is a triple $(S,R,g)$
with
$S$ a finite set of states,
$R \subseteq S \times S$
a binary relation
called the transition relation
and labelling $g$ tells us which atoms are true
at each state: for each $s \in S$,
$g(s) \subseteq \atoms$.
Furthermore, $R$ is assumed to be serial: every state has at least one successor
$\forall x \in S.  \exists y \in S \mbox{ s.t.} (x,y) \in R$.

Given a structure $(S,R,g)$,
an $\omega$-long sequence of states
$\langle s_0, s_1, s_2, ... \rangle$
from $S$ is a {\em fullpath}
(through $(S,R,g)$)
iff
for each $i$,
$(s_i,s_{i+1}) \in R$.
If $\sigma= \langle s_0, s_1, s_2, ... \rangle$
is a fullpath
then
we write
$\sigma_i=s_i$,
$\sigma_{\geq j}= \langle s_j, s_{j+1}, s_{j+2}, ... \rangle$
(also a fullpath).

The (well formed) formulas of LTL
include the atoms and
if $\alpha$ and $\beta$ are formulas then so are
$\neg \alpha$, $\alpha \wedge \beta$,
$X \alpha$, and $\alpha U \beta$.
We will also include some formulas
built using other connectives
that are often presented as abbreviations instead.
However, before detailing them
we present the semantic clauses.

Semantics defines
truth of formulas
on a fullpath through a structure.
Write $M, \sigma \models \alpha$ iff the
formula $\alpha$ is true of the fullpath $\sigma$
in the structure $M=(S,R,g)$ defined recursively by:\\
\begin{tabular}{lll}
$M, \sigma \models p$ & iff & $p \in g(\sigma_0)$, for $p \in \atoms$;\\
$M, \sigma \models \neg \alpha$ & iff &
$M, \sigma \not \models \alpha$;\\
$M, \sigma \models \alpha \wedge \beta$ & iff &
$M, \sigma \models \alpha$ and 
$M, \sigma \models \beta$;\\
$M, \sigma \models X \alpha$ & iff &
$M, \sigma_{\geq 1} \models \alpha$; and\\
$M, \sigma \models \alpha U \beta$ & iff &   
there is some $i \geq 0$ s.t.
$M, \sigma_{\geq i} \models \beta$ and
for each $j$,\\&&
if $0 \leq j < i$ then
$M, \sigma_{\geq j}  \models \alpha$.\\ 
\end{tabular}

Standard abbreviations in LTL
include the
classical
$\truth \equiv p \vee \neg p$,
$\falsity \equiv \neg \truth$,
$\alpha \vee \beta \equiv \neg ( \neg \alpha \wedge \neg \beta)$,
$\alpha \rightarrow \beta \equiv \neg \alpha \vee \beta$,
$\alpha \leftrightarrow \beta \equiv
( \alpha \rightarrow \beta ) \wedge ( \beta \rightarrow \alpha)$.
We also have the temporal:
$F \alpha \equiv (\truth U \alpha)$,
$G \alpha \equiv \neg F ( \neg \alpha)$
read as eventually and  always respectively.

A formula $\alpha$ is {\em satisfiable} iff there
is some structure $(S,R,g)$
with some fullpath
$\sigma$ through it such that
$(S,R,g), \sigma \models \alpha$.
A formula is {\em valid} iff 
for all structures $(S,R,g)$
for all fullpaths
$\sigma$ through $(S,R,g)$ we have
$(S,R,g), \sigma \models \alpha$.
A formula is valid iff its negation 
is not satisfiable.

For example,
$\truth$, $p$, $Fp$, $p \wedge Xp \wedge F \neg p$,
$Gp$
are each satisfiable.
However,
$\falsity$, $p \wedge \neg p$,
$Fp \wedge G \neg p$,
$p \wedge G(p \rightarrow Xp) \wedge F \neg p$
are each not satisfiable.

We will fix a particular formula,
$\phi$ say,
and describe
how a tableau for $\phi$ is built
and
how that decides the
satisfiability
or otherwise, of $\phi$.
We will use other formula names
such as $\alpha$, $\beta$, e.t.c.,
to indicate
arbitrary formulas
which are used in labels
in the tableau for $\phi$.


\section{General Idea of the Tableau}
\label{sec:tab}

The tableau for $\phi$ is a tree of nodes
(going down the page from a root)
each labelled by 
a set of formulas.
To lighten the notation,
when we present a tableau in a diagram
we will omit the braces $\{ \}$
around the sets which form labels.
The root is labelled $\{ \phi\}$.


Each node has 0, 1 or 2 children:
it is the parent of its children.
A node is called a leaf if it has 0 children.
A leaf determines a branch,
being itself, its parent, its parent's parent e.t.c..
A leaf may be
crossed ($\times$),
indicating its branch has failed,
or ticked ($\surd$),
indicating the branch is successful.
Otherwise, a leaf indicates an unfinished branch
and having an unfinished branch 
means that the tableau is unfinished.
In that case there will be a way to
use one if the rules below
to extend the branch and the tableau.

The whole tableau is successful if there is a successful branch.
This indicates a ``yes'' answer to the satisfiability of $\phi$.
It is failed if all branches are failed: indicating ``no''.
Otherwise it is unfinished.
Note that you can stop the algorithm,
and report success if you tick a 
leaf even if other branches have not reached
a tick or cross yet.

A small set of tableau rules (see below) determine
whether a node has one or two children
or whether to cross or tick it.
This depends on the label of the parent,
and also, for some rules, on labels
on ancestor nodes,
higher up the branch towards the root.
The rule also determines
the labels on the children.

\newcommand{\horizontalTransition}{$\rightarrow \hspace{-0.45cm} / \hspace{-0.1cm} / \hspace{0.2cm}$}
\newcommand{\verticalTransition}{$\downarrow \hspace{-0.225cm} =$}

The parent-child relation
is indicated by a vertical arrow in some diagrams
but otherwise just by vertical alignment.
However, to indicate use of one particular rule
(coming up below)
called the TRANSITION rule
we will 
use a vertical arrow (\verticalTransition)
with two strikes across it,
or just an equals sign.

A node label may be the empty set, although
it then can be immediately ticked by rule EMPTY below.

A formula which is
an atomic proposition,
a negated atomic proposition
or of the form
$X \alpha$ or $\neg X \alpha$
is called {\em elementary}.
If a node label is non-empty
and there are no direct contradictions,
that is no $\alpha$ and $\neg \alpha$
amongst the formulas in the label,
and every formula it contains is elementary then
we call the label (or the node) {\em poised}.

\newcommand{\cupdot}{\mathbin{\mathaccent\cdot\cup}}

Most of the rules {\em consume} formulas.
That is,
the parent may have a label
$\Gamma = \Delta \cupdot \{ \alpha \}$,
where
$\cupdot$ is disjoint union,
and a child may have a label
$\Delta \cup \{ \gamma \}$ so that
$\alpha$ has been removed,
or consumed.
In diagrams, if it is useful we sometimes indicate such a
formula, known as a {\em pivot} formula, by underlining it.

See the
$\neg p \wedge X \neg p \wedge (q Up)$
example given in
Figure~\ref{fig:eg9}
of a simple successful tableau.
Note that if one was building, or searching,
the tableau in a depth-first left-to-right
manner then the tableau 
could be judged as successful
after the left-most ticked branch was found
and we could terminate the search 
without constructing 
quite as much as shown.
However, we are not assuming that tableaux
need to be constructed in that order.

\begin{figure}
\centering
\includegraphics[width=8cm,trim= 5cm 15.5cm 5cm 4.55cm,clip=true]{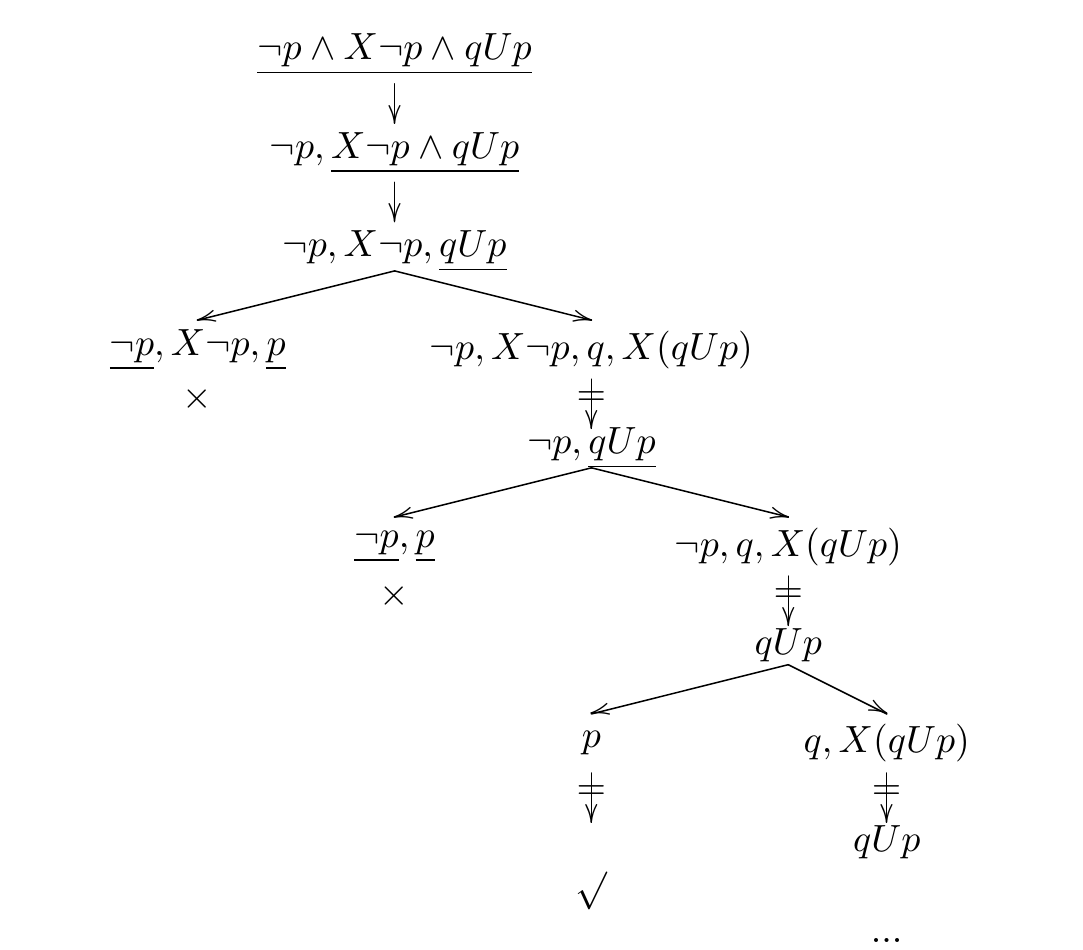}
\caption{$\neg p \wedge X \neg p \wedge (q Up)$}
\label{fig:eg9}
\end{figure}

As usual, a tableau node $x$ is an ancestor of a node $y$
precisely when $x=y$ or $x$ is a parent of $y$
or a parent of a parent of $y$, e.t.c.
Then $y$ is a descendent of $x$
and we write $x \leq y$.
Node $x$ is a proper ancestor of $y$,
written $x < y$, iff
it is an ancestor and $x \neq y$.
Similarly proper descendent.
When we say that a node $y$ is between
node $x$ and its descendent $z$,
$x \leq y \leq z$, then we mean that
$x$ is an ancestor of $y$ and
$y$ is an ancestor of $z$.

A formula of the form $X(\alpha U \beta)$ 
(or $X F \beta$ or $X \neg G  \beta'$)
appearing in a poised
label of a node $m$, say, also plays an important role.
We will call such a formula
an {\em $X$-eventuality} because
$\alpha U \beta$ (or $F \beta$ or $\neg G \beta'$) is
often called an {\em eventuality},
and its truth depends on $\beta$  (or $\beta=\neg \beta'$ in the $G$ case) being eventually
true in the future (if not present).
If the formula $\beta$ appears 
in the label of a descendent node $n$ of $m$
then we say that the $X$-eventuality
at $m$
has been
{\em fulfilled} by $n$ by $\beta$ being satisfied there.
 
\section{Rules}
\label{sec:rules}

There are twenty-five rules altogether. We would only need ten for the 
minimal LTL-language, but recall that
we are treating the usual abbreviations as
first-class symbols, so they each need
a pair of rules.
In this conference paper we skip
the rules for abbreviations.

Most of the rules are what we call {\em static} rules.
They tell us about 
formulas that may be true at a single state in a model.
They 
determine the number,
$0$, $1$ or $2$, of child nodes
and the labels on those nodes
from the label on the current parent node
without reference to any other labels.
These rules are unsurprising
to anyone familiar with any of the 
previous LTL tableau approaches.

To save repetition of wording we use
an abbreviated notation for presenting
each rule: the rule $A / B$
relates the parent label $A$
to the child labels $B$.
The parent label
is a set of formulas.
The child labels
are given as either
a $\tick$ representing the leaf of a successful branch, 
a $\cross$ representing the leaf of a failed branch, a 
single set being the label on the single child
or a pair of sets
separated by a vertical bar $|$ being the respective
labels on a pair of child nodes.

Thus, for example, the $U$-rule, means
that if a node is labelled 
$\{ \alpha U \beta \} \cupdot \Delta$,
if we choose to use the
$U$-rule and
if we choose to decompose $\alpha U \beta$ using the rule
then
the node will have two children labelled
$\Delta  \cup \{  \beta\}$
and 
$\Delta  \cup \{ \alpha,  X(\alpha U \beta)\}$
respectively.

Often, several different rules may be applicable to a
node with a certain label.
If another applicable rule is chosen,
or another formula is chosen to be decomposed
by the same rule,
then the child labels may be different.
We discuss this non-determinism later.

The four (static)  termination rules:\\
\begin{tabular}{lllll}
{EMPTY-rule:} &
$\{  \}  \;/\; \tick$; & $\;\;$ &
{$\falsity$-rule:} &
$\{ \falsity \} \cupdot \Delta \;/\; \times$;\\
{CONTRADICTION-rule:} &
$\{ \alpha, \neg \alpha \} \cupdot \Delta \;/\; \times$;& $\;\;$ &
{$\neg \truth$-rule:} &
$\{ \neg \truth \} \cupdot \Delta \;/\; \times$.\\
\end{tabular}

These are the positive static rules:\\
\begin{tabular}{ll}
{$\truth$-rule:} &
$\{ \truth \} \cupdot \Delta \;/\; \Delta$.\\
{$\wedge$-rule:} &
$\{ \alpha \wedge \beta \} \cupdot \Delta  \;/\;  
( \Delta  \cup \{  \alpha, \beta \} )$.\\
{$U$-rule:} &
$\{  \alpha U \beta  \} \cupdot \Delta  \;/\;  
( \Delta  \cup \{  \beta \} \; |\; 
\Delta  \cup \{ \alpha, X( \alpha U \beta) \}) $.\\
\end{tabular}

There are also static rules
for negations:\\
\begin{tabular}{ll}
{$\neg \neg$-rule:} &
$\{ \neg \neg \alpha \} \cupdot \Delta \;/\; \Delta \cup \{ \alpha \}$.\\
{$\neg \wedge$-rule:} &
$\{ \neg (\alpha \wedge \beta) \} \cupdot \Delta  \;/\;  
( \Delta  \cup \{  \neg \alpha \}
\; |\; \Delta  \cup \{ \neg \beta \} )$.\\
{$\neg U$-rule:} &
$\{  \neg ( \alpha U \beta)  \} \cupdot \Delta  \;/\;  
( \Delta  \cup \{  \neg \alpha, \neg \beta \} \; |\; 
\Delta  \cup \{ \neg \beta, X \neg ( \alpha U \beta) \}) $.\\
\end{tabular}


Although the following rules can be derived from the above,
and although we are trying to abbreviate the explanation here,
the following derived rules may be useful 
for the reader to better understand
the brief examples presented in this paper.
They are static rules for $F$ and $G$ and their
negations:\\
\begin{tabular}{ll}
{$F$-rule:} &
$\{ F \alpha \} \cupdot \Delta  \;/\;  
( \Delta  \cup \{  \alpha \} \; |\; 
\Delta  \cup \{ XF \alpha \}) $.\\
{$G$-rule:} &
$\{ G\alpha \} \cupdot \Delta  \;/\;  
\Delta  \cup \{ \alpha, XG \alpha\}$.\\
{$\neg G$-rule:} &
$\{ \neg G \alpha \} \cupdot \Delta  \;/\;  
( \Delta  \cup \{  \neg \alpha \} \; |\; 
\Delta  \cup \{ X \neg G \alpha \}) $.\\
{$\neg F$-rule:} &
$\{ \neg F \alpha \} \cupdot \Delta  \;/\;  
\Delta  \cup \{ \neg \alpha, X \neg F \alpha\}$.\\
\end{tabular}

The remaining four non-static rules are only applicable
when a label is poised
(which implies that none of the static rules will
be applicable to it).
In presenting them we 
use the convention that
a node $u$ has label $\Gamma_u$.
More than one of the following rules 
may apply to a particular leaf node at any time:
in that case, 
we apply the rule which is listed
here first.

{\bf [LOOP]:}
If a node $v$ 
with poised label $\Gamma_v$
has a 
proper ancestor
(i.e., not itself)
$u$ with poised label $\Gamma_u$ such that
$\Gamma_u \supseteq \Gamma_v$,
and for each $X$-eventuality
$X(\alpha U \beta)$ or $XF \beta$ in $\Gamma_u$
we have 
 a node $w$
such that $u < w \leq v$
and $\beta \in \Gamma_w$
then
$v$ should be a ticked leaf.

{\bf [\prune]:}
Suppose that
$u < v < w$
and each of
$u$, $v$ and $w$
have the same poised label 
$\Gamma$.
Suppose also that
for each $X$-eventuality $X(\alpha U \beta)$
or $XF \beta$ in $ \Gamma$,
if there is $x$ with
$\beta \in \Gamma_x$
and $v < x \leq w$
then
there is $y$ such that
$\beta \in \Gamma_y$
and $u < y \leq v$.
Then $w$ should be a crossed leaf.

{\bf [\prunez]:}
Suppose that
$u < v$
share the same poised label $\Gamma$ and
$\Gamma$ contains at least one $X$-eventuality.
Suppose that
there is 
no $X$-eventuality $X(\alpha U \beta)$ or $XF \beta$ in $\Gamma$
with a node $x$
such that
$\beta \in \Gamma_x$
and $u < x \leq v$.
Then $v$ should be a crossed leaf.

{\bf [TRANSITION]:}
If none of the other rules above do apply to it
then
a node labelled by 
poised $\Gamma$ say,
can have one child
whose label is: 
$\Delta = \{ \alpha |
X \alpha \in \Gamma \}
\cup
\{ \neg \alpha |
\neg X \alpha \in \Gamma \}
$.

The tableau is extended
at each stage by
choosing the leaf node
on any unfinished branch
and attempting to apply a rule:
if no other rules apply, the TRANSITION rule
will.
The choice of unfinished branch is arbitrary.
If there are no unfinished branches,
so the tableau is finished,
then clearly it will either be successful
or failed.


\section{The motivation for the tableaux rules}
\label{sec:motiv}


A traditional classical logic style tableau
starts with the formula
in question and breaks it down into
simpler formulas
as we move down the page.
The simpler formulas
being satisfied should ensure that the
more complicated parent label
is satisfied.
Alternatives are presented as branches.
See the
example given in
Figure~\ref{fig:eg4}.

\begin{figure}
\centering
\begin{minipage}{.5\textwidth}
  \centering
\includegraphics[width=8cm,trim= 5cm 20cm 4cm 4.7cm,clip=true]{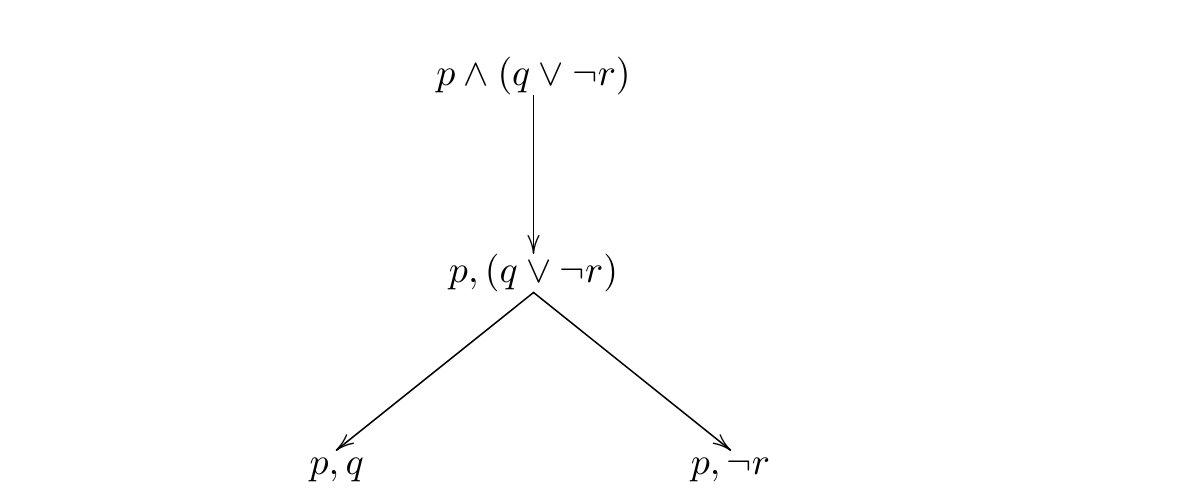}
  \captionof{figure}{Classical disjunction}
  \label{fig:eg4}
\end{minipage}%
\begin{minipage}{.5\textwidth}
  \centering
\includegraphics[width=8cm,trim= 6cm 19cm 4cm 4.7cm,clip=true]{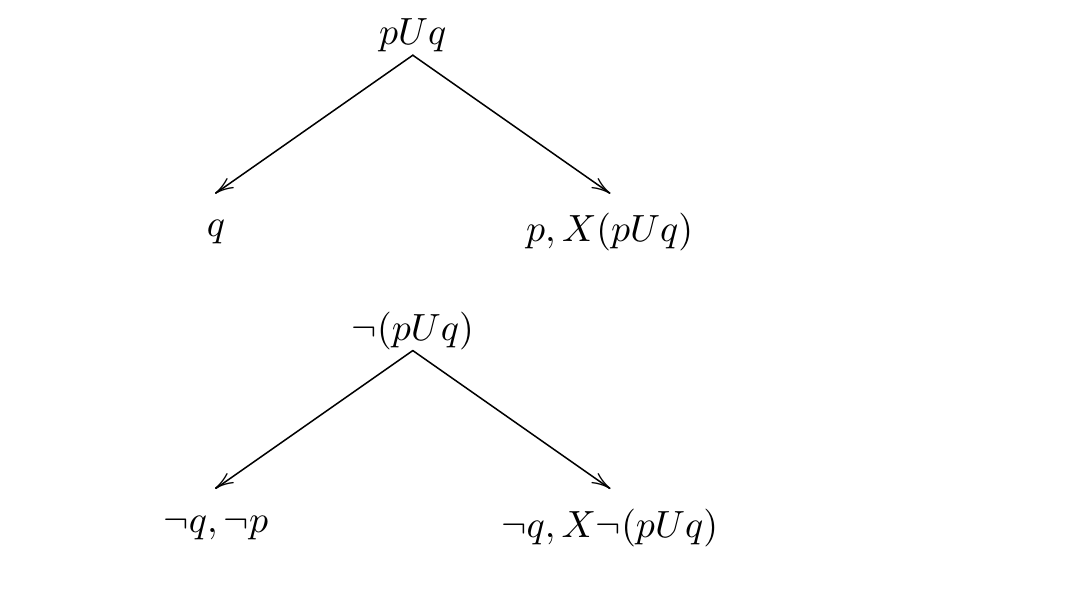}
  \captionof{figure}{Until also gives us choices}
  \label{fig:eg6}
\end{minipage}
\end{figure}

We follow this style of tableau
as is evident by the
classical look of the tableau rules
involving classical connectives.
The $U$ and $\neg U$ rules
are also in this vein,
noting that temporal formulas
such as 
$U$ also gives us choices:
Figure~\ref{fig:eg6}.

Eventually, we break down a formula
into elementary ones.
The atoms and their negations
can be satisfied immediately
provided there are no contradictions,
but to reason about the $X$ formulas
we need to move forwards in time.
\longversion{
How do we do this?
See Figure~\ref{fig:eg7}.

\begin{figure}
\centering
\begin{minipage}{.5\textwidth}
  \centering
\includegraphics[width=8cm,trim= 5cm 18cm 5cm 4.5cm,clip=true]{example7}
  \captionof{figure}{But what to do when we want to move forwards in time?}
  \label{fig:eg7}
\end{minipage}%
\begin{minipage}{.5\textwidth}
  \centering
\includegraphics[width=8cm,trim= 5cm 18cm 5cm 4.5cm,clip=true]{example8}
  \captionof{figure}{Introduce a new type of TRANSITION}
  \label{fig:eg8}
\end{minipage}
\end{figure}

The answer is that we introduce a new type of TRANSITION step:
see
Figure~\ref{fig:eg8}.
Reasoning switches to the next time point
and we carry over only information
nested below $X$ and $\neg X$.

}
\shortversion{
This is where we use the TRANSITION step: see
Figure~\ref{fig:eg9}.
Reasoning switches to the next time point
and we carry over only information
nested below $X$ and $\neg X$.

}

 With just these rules
we can now do the whole
$\neg p \wedge X \neg p \wedge (q Up)$
example. See
Figure~\ref{fig:eg9}.

This example is rather simple, though,
and we need additional rules
to deal with infinite behaviour.
 Consider the example
$Gp$
which,
in the absence of additional rules, gives rise to the very repetitive infinite tableau in
Figure~\ref{fig:eg10}.
\begin{figure}
\centering
\begin{minipage}{.45\textwidth}
  \centering
\includegraphics[width=8cm,trim= 8cm 17cm 4cm 4.7cm,clip=true]{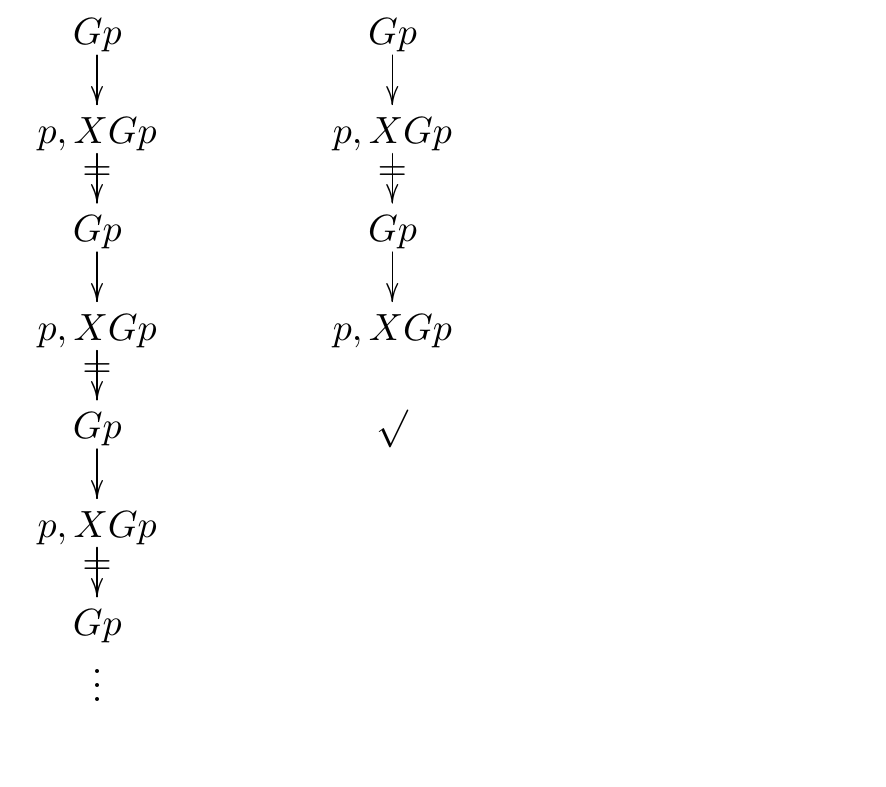}
  \captionof{figure}{$Gp$ gives rise to a very repetitive infinite tableau without the LOOP rule,
  but succeeds quickly with it}
  \label{fig:eg10}
\end{minipage}%
\begin{minipage}{.05\textwidth}
$\;$
\end{minipage}%
\begin{minipage}{.45\textwidth}
  \centering
\includegraphics[width=10cm,trim= 6.5cm 16cm 0.8cm 4.7cm,clip=true]{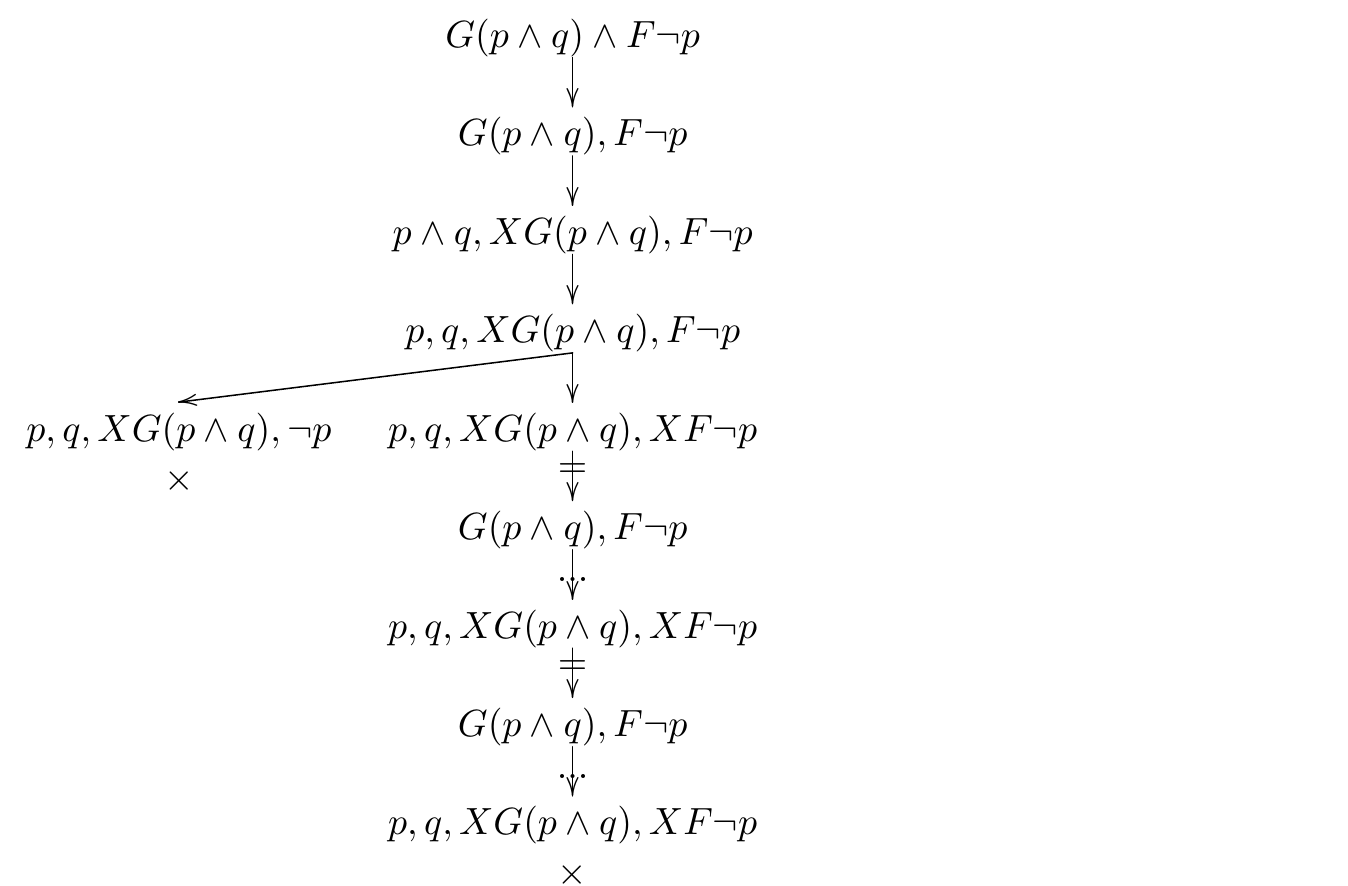}
  \captionof{figure}{$G(p \wedge q) \wedge F \neg p$ crossed by PRUNE}
  \label{fig:eg11}
\end{minipage}
\end{figure}

Notice that the infinite fullpath that it suggests is a model for $Gp$
as would be a fullpath just consisting of the one state
with a self-loop (a transition from itself to itself).

This suggests that we should allow
the tableau branch construction to halt if a
state is repeated.
However the
example $G(p \wedge q) \wedge F \neg p$ shows
that we can not just accept infinite loops
as demonstrating satisfiability:
the tableau for this
unsatisfiable formula would have an 
infinite branch if we did
not use the PRUNE rule
to cross it
(Figure~\ref{fig:eg11}).
Note that the optional \prunez\ rule
can be used to cross the branch
one TRANSITION earlier.

Notice that the infinite fullpath that 
the tableau suggests is this time
not a model for $G(p \wedge q) \wedge F \neg p$.
Constant repeating of
$p, q$ being made true does not
satisfy the conjunct
$F \neg p$.
We have postponed the {\em eventuality}
forever
and this is not acceptable.

If $\alpha U \beta$ appears in 
the tableau label of a node $u$
then we want
$\beta$ to appear in the label of 
some later (or equal node) $v$.
In that case we say that the eventuality
is {\em satisfied}
by $v$.

Eventualities are eventually satisfied
in any (actual) model of a formula:
by the semantics of $U$.

This motivates the LOOP rule.
If a label is repeated along a branch
and all eventualities
are satisfied in between
then
we can build a model
by looping states.
In fact, the ancestor can have a superset
and it will work (see the soundness proof in 
\cite{\techreport}).

Examples like $G(p \wedge q) \wedge F \neg p$
(in Figure~\ref{fig:eg11})
and
$p \wedge G(p \rightarrow Xp)
\wedge F \neg p$
which have branches that go on forever
without satisfying eventualities,
still present a problem for us.
We need to stop and fail branches
so that we can answer ``no'' correctly and terminate
and so that we do not get distracted 
when another branch may be successful.
In fact, no infinite branches should be allowed.

The final rule that we
consider,
and the most novel,
is based on
observing that
these infinite branches
are just getting repetitive without
making a model.
The repetition
is clear because
there are only a finite set of 
formulas
which can ever appear in labels
for a given initial formula $\phi$.
The {\em closure set} for a formula $\phi$
is as follows:
\[
\{
\psi, \neg \psi | \psi \leq \phi \}
\cup 
\{ X (\alpha U \beta),
\neg X (\alpha U \beta) |
\alpha U \beta \leq \phi 
\}
\]
Here we use $\psi \leq \phi$ to mean that
$\psi$ is a subformula
of $\phi$.
The size of closure set is
$\leq 4n$
where $n$ is the length of 
the initial formula.
Only formulas
from this finite set
will appear in labels.
So there are only $\leq 2^{4n}$
possible labels.

The PRUNE rule is as follows.
If a node at the end of a branch
(of an unfinished tableau)
has a label which has 
appeared already twice above,
and
between the second and third
appearance 
there are no new eventualities satisfied
that were not already satisfied between
the first and second appearances
then 
that whole interval of states (second to third appearance)
has been useless.
The \prunez\ rule
applies similar reasoning to an initial
repeat in which no eventualities are fulfilled.
In Figure~\ref{fig:eg11},
the PRUNE rule 
crosses the right hand branch
as the only $X$-eventuality
$XF \neg p$
remains unfulfilled
as $\neg p$ does not appear
in a label despite three repeats of
the same label.

 
It should
be mentioned that the
tableau building process
we describe 
above
is non-deterministic in several respects
and so
really not a proper description of an algorithm.
However, we will see in the
correctness proof below
that the further details 
of which formula to
consider at each step
in building the 
tableau are unimportant.

Finally a suggestion for a nice
example to try.
Try
$p \wedge G(p \leftrightarrow X \neg p)
\wedge G(q \rightarrow \neg p)
\wedge G(r \rightarrow \neg p)
\wedge G(q \rightarrow \neg r)
\wedge GFq
\wedge GFr$.

 \shortversion{
 
 \section{Proof of Correctness: Soundness and Completeness}
\label{sec:correctness}

In the long technical report \cite{\techreport},
we show full details of the proof of soundness, completeness and termination
for the tableau search.
Termination is guaranteed because there can be no infinitely long branches.
Soundness presents no novelty to
those familiar with soundness proofs for similar tableaux:
construct a model from the successful tableau branch.
Completeness, however, is novel, because of the novel rules
and we present that here.


}

\longversion{

\section{Proof of Correctness: Soundness:}
\label{sec:sound}

Justification will consist of three
parts:
each established
whether or not the optional rules
are used.

Proof of soundness.
If a formula
has a successful tableau then
it has a model.

Proof of completeness:
If a formula has a model then
building a tableau
will be successful.

Proof of termination.
Show that the 
tableau building
algorithm will always terminate.

First, a sketch of the Proof of Termination.
Any reasonable tableau search
algorithm will always terminate 
because
there can be no infinitely long branches.
We know this because the
LOOP and \prune\ rule will
tick or
cross any that go on too long.
Thus there will either
be at least one tick or all crosses.
Termination is also why we require that
static rules consume formulas
in between 
TRANSITION rules.


Now soundness.
We use a successful tableau to make a model
of the formula,
thus showing that it is satisfiable.
In fact we just use a successful branch.
Each TRANSITION
as we go down the branch
tells us that we are moving from one state
to the next.
Within a particular state we can make all the
formulas listed true there
(as evaluated along the rest of the fullpath).
Atomic propositions listed tell us that they
are true at that state.
An induction deals with most of the rest of the
formulas.
Eventualities either get satisfied and disappear
in a terminating branch
or have to be satisfied 
if the branch is ticked by the LOOP rule.

Suppose that $T$ is a successful tableau for $\phi$.
Say that the branch
$b = \langle x_0, x_1, x_2, ..., x_n \rangle$ of nodes
of $T$ ends in a tick.
Denote by $\Gamma(u)$, the
tableau label on a 
node $u$.
We build $(S,R,g)$ 
from $b$ and its tableau labels.

In fact, there are only a few
$x_i$ that really matter:
each time when we are about to use TRANSITION
and when we are about to use EMPTY or LOOP
to finish (at $x_n$).
Let $j_0, j_1, j_2, ..., j_{k-1}$ be the  indices of nodes
from $b$ 
at which the TRANSITION rule is used.
That is, the TRANSITION rule is used 
to get from $x_{j_i}$ to $x_{j_{i}+1}$.
See
Figure~\ref{fig:eg13}.

\begin{figure}
\centering
\includegraphics[width=8cm]{eg13}

\caption{TRANSITIONS in $b$: view sideways}
\label{fig:eg13}
\end{figure}

If $b$ ends in a tick from EMPTY
then
let $S= \{ 0, 1, 2, ..., k \}$:
so it contains $k+1$ states.
It is convenient to consider
that $j_k=n$ in the EMPTY case
and put
$\Gamma(x_{j_k}) = \{ \}$.
The states will correspond to
$x_{j_0}, x_{j_1}, ..., x_{j_{k-1}}, x_{j_k}$.

See
Figure~\ref{fig:eg15}.

\begin{figure}
\centering
\includegraphics[width=10cm,trim= 5cm 22.5cm 5cm 4.5cm,clip=true]{example15}

\caption{$b$ ends in a tick from EMPTY}
\label{fig:eg15}
\end{figure}

If $b$ ends in a tick from LOOP
then
let $S= \{ 0, 1, 2, ..., k-1 \}$:
so it contains $k$ states.
These will correspond to
$x_{j_0}, x_{j_1}, ..., x_{j_{k-1}}$.

See
Figure~\ref{fig:eg16}.

\begin{figure}
\centering
\includegraphics[width=8cm,trim= 4.5cm 21.5cm 4.5cm 4.5cm,clip=true]{example16}

\caption{$b$ ends in a tick from LOOP}
\label{fig:eg16}
\end{figure}

Let $R$ contain each $(i,i+1)$ for $i<k-1$.
We will also add extra pairs to $R$ to
make a fullpath.

If $b$ ends in a tick from EMPTY
then 
just put 
$(k-1,k)$ and a self-loop $(k,k)$
in $R$ as well.
See
Figure~\ref{fig:eg14}.

\begin{figure}
\centering
\includegraphics[width=10cm,trim= 5cm 22cm 5cm 4.5cm,clip=true]{example14}

\caption{$b$ ends in a tick from EMPTY}
\label{fig:eg14}
\end{figure}

If $b$ ends in a tick from LOOP
then 
just put 
$(k-1,l)$ 
in $R$ as well
where $l$ is as follows.
Say that $x_{m}$ is the state that
``matches'' $x_n$.
So look at the application of the LOOP rule that
ended $b$ in a tick.
There is a proper ancestor $x_{m}$ of $x_n$
in the tableau with
$\Gamma(x_{m}) \supseteq \Gamma(x_n)$
and all eventualities
in $\Gamma(x_{m})$ are cured
between $x_{m}$ and $x_n$.
The rule requires
$x_{m}$ to be poised
so it is just before
 a TRANSITION rule.
 So say that $m=j_l$.
Put $(k-1,l) \in R$.
See
Figure~\ref{fig:eg17}.

\begin{figure}
\centering
\includegraphics[width=8cm,trim= 4.5cm 21cm 4.5cm 5cm,clip=true]{example17}

\caption{$m=j_l$}
\label{fig:eg17}
\end{figure}

A model with a 
line and one loop back is sometimes
called a {\em lasso}
\cite{SiC85}.

Now let us define the labelling $g$ of states by atoms
in $(S,R,g)$.
Let $g(i)= \{ p \in \atoms | p \in \Gamma(x_{j_i}) \}$.

Finally our proposed model of $\phi$ is
along the only fullpath $\sigma$
of $(S,R,g)$ that starts at $0$.
That is,
if $b$ ends in a tick from EMPTY
then 
$\sigma = \langle  0, 1, 2, ..., k-1, k, k ,k ,k , ... \rangle$
while
if $b$ ends in a tick from LOOP
then 
$\sigma = \langle  0, 1, 2, ..., k-2, k-1, l, l+1, l+2, ..., k-2, k-1, l, l+1 , ... \rangle$.

Let $N$ be the length of the first (non-repeating) part of the model:
in the EMPTY case
$N=k-1$ and in the
LOOP case
$N=l$.
Let $M$ be the length of the repeating part:
in the EMPTY case
$M=1$ and in the
LOOP case
$N=k-l$.
So in either case the model has
$N+M$ states $\{ 0, 1, ..., N+M-1\}$
with state $N$ coming (again) after state $N+M-1$ etc.
In particular,
$\sigma_i=i$ for $i<N$
and
$\sigma_i=(i-N)\bmod M +N$
otherwise.

Now we are going to define a set $\Delta_i$ of formulas
for each $i=0, 1, 2, ....$,
that we will want to be satisfied at $\sigma_i$.
They collect formulas in labels in between 
TRANSITIONS, and
loop on forever.
Thus $\Delta_0$ is to be
a set of formulas that 
we want to be true at the first state of the model
and
$\Delta_N$
those true when the model starts to repeat.
In the very special case of
$N=0$,
when $0$ is the first repeating state,
let
$\Delta_0 = \bigcup_{s \leq j_0} \Gamma(x_{s})
\cup
\bigcup_{j_{k-1} < s \leq n} \Gamma(x_{s})$.
If $N>0$ then
let the collection for the first repeating state be
$\Delta_N
= \bigcup_{j_{N-1} < s \leq j_N} \Gamma(x_{s})
\cup
\bigcup_{j_{k-1} < s \leq n} \Gamma(x_{s})$.
So in either case,
$\Delta_N$ has formulas
from two separate sections
of the tableau.

If $N>0$, for $i=0$,
put
$\Delta_0= \bigcup_{s \leq j_0} \Gamma(x_{s})$.
For each $i=1, 2, ..., N+M-1$,
except $i=N$,
let 
$\Delta_i= \bigcup_{j_{i-1} < s \leq j_j} \Gamma(x_{s})$,
all the formulas that
appear between the $(i-1)th$
and $i$th consecutive uses
of the TRANSITION rule.

Finally,
for all $i \geq N+M$,
put
$\Delta_i= \Delta_{i-M}
= \Delta_{(i-N) \bmod M + N}$.

\begin{lemma}
\label{lem:hl2}
If $X \alpha \in \Delta_i$ 
for some $i$,
then 
$\alpha \in \Delta_{i+1}$.
Also,
if $\neg X \alpha \in \Delta_i$ 
for some $i$,
then 
$\neg \alpha \in \Delta_{i+1}$.
\end{lemma}

\begin{proof}
Just consider the $X \alpha$ case:
the $\neg X \alpha$ case is similar.

Choose some $i$ such that
$X \alpha \in \Delta_i$.
As the $\Delta_i$s repeat, we may as well assume
$0 \leq i \leq N+M-1$.
There are three main cases:
$i=N$, $i=N+M-1$ or otherwise.

Consider the $i \neq N$ and $i \neq N+M-1$ case first.
Thus $X \alpha$ appears in 
some 
$\Gamma(x_s)$ for
$j_{i-1} < s \leq j_i$.
Because no static rules 
remove them,
any formula
of the form $X \alpha$
will survive in the tableau labels
until the poised label
$\Gamma(x_{j_i})$
just before a TRANSITION rule is used.
After the TRANSITION rule we will have
$\alpha \in 
\Gamma(x_{j_i+1})$
and $\alpha$ will 
be collected in 
$\Delta_{i+1}$, the next state label collection.

However,
in the other case
when $i=N+M-1$,
we have
$X \alpha$ 
surviving to be in the poised label
$\Gamma(x_{j_{N+M-1}})$.
In that case,
$\alpha$ will be 
in $\Gamma(x_{j_{N+M-1}+1})$ and
collected in
$\Delta_{N}$.
But,
when $i=N+M-1$
then $i+1 = N+M$
and so
$\Delta_{i+1}=\Delta_{N+M}
= \Delta_N \ni \alpha$ as required.

Finally,
in the case
when $i=N$,
we may have
$X \alpha$ 
in 
some
$\Gamma(x_s)$ for
$j_{N-1} < s \leq j_N$
or in
some 
$\Gamma(x_s)$ for
$j_{k-1} < s \leq j_n$.
In the first subcase,
it survives until 
$\Gamma(x_N)$
and the reasoning proceeds as above.
In the second subcase,
$X \alpha \in \Gamma(x_s)$ for
some $s$ with $j_{k-1} < s \leq j_n$.
Thus it survives to
be in 
$\Gamma(x_{j_n})$
when we are about to use the LOOP rule.
However, it will then also be in
$\Gamma(x_{j_l}) \supseteq \Gamma(x_n)$.
After the TRANSITION rule
at $x_{j_l}=x_{j_N}$,
$\alpha$ will be in 
$\Gamma(x_{j_N+1})$ and will
be collected in
$\Delta_{N+1}$ as required.
\end{proof}

\begin{lemma}
\label{lem:hl3}
Suppose $\alpha U \beta \in \Delta_i$.
Then there is some $d \geq i$
such that
$\beta \in \Delta_d$
and for all
$f$,
if $i \leq f < d$
then 
$\{ \alpha, \alpha U \beta, X(\alpha U \beta) \}
\subseteq 
\Delta_f$.
\end{lemma}

\begin{proof}
For all $i$,
whenever $\alpha U \beta \in \Delta_i$
then either $\beta$ will also be there
or both
$\alpha$ and $X (\alpha U \beta)$ will be.
To see this, consider which static rules can be used to remove
$\alpha U \beta$. Only 
the $U-$ and $F$-rule can do this.

By Lemma~\ref{lem:hl2},
if $X ( \alpha U \beta) \in \Delta_i$
then 
$\alpha U \beta \in \Delta_{i+1}$.
Thus, by a simple induction,
$\alpha U \beta$,
and so also the other two formulas,
will be in all
$\Delta_f$ for $f \geq i$
unless $f \geq d \geq i$ with
$\beta \in \Delta_d$.

It remains to show that 
$\beta$ 
does appear in some
$\Delta_d$.

If the branch ended with EMPTY,
then we know this must happen
as the $\Gamma(x_n)$
is empty and so does not contain $X(\alpha U \beta)$.
So suppose
that the branch
ended with
a LOOP
up to tableau node $x_{j_l}$
but that
$\alpha U \beta \in \Delta_f$
for all $f \geq i$.

For some $f >i$, we have
$(f-N) \bmod M =0$,
so we know
$\alpha U \beta \in \Delta_f
= \Gamma(x_{j_l})$.
Thus 
$\alpha U \beta$ is one of the eventualities
in $\Gamma(x_{j_l})$
that have to be satisfied
between
$x_{j_l}$ and $x_n$.

Say that $\beta \in \Gamma(x_h)$
and it will also be in the
next pre-TRANSITION label
$x_{j_q}$ after $x_h$.
So eventually we find a 
$d \geq i$ such that
$(d-N) \bmod M+N=q$
and
$\beta \in \Delta_d$
as required.
\end{proof}

\begin{lemma}
\label{lem:hl4}
Suppose $\neg (\alpha U \beta) \in \Delta_i$.
Then either 1) or 2) hold.
1) There is some $d \geq i$
such that
$\neg \alpha, \neg \beta \in \Delta_d$
and for all
$f$,
if $i \leq f < d$
then 
$\{ \neg \beta, \neg(\alpha U \beta), X\neg (\alpha U \beta) \}
\subseteq 
\Delta_f$.
2) For all $d \geq i$,
$\{ \neg \beta, \neg(\alpha U \beta), X\neg (\alpha U \beta) \}
\subseteq 
\Delta_d$.

\end{lemma}

\begin{proof}
This is similar to Lemma~\ref{lem:hl3}.
\end{proof}

Now we need to show that
$(S,R,g), \sigma \models \phi$.
To do so we prove 
a stronger result:
a {\em truth lemma}.

\begin{lemma}[truth lemma]
for all $\alpha$,
for all $i\geq 0$,
if
$\alpha \in \Delta_i$
then
$(S,R,g), \sigma_{\geq i} \models \alpha$.
\end{lemma}

\begin{proof}
This is proved by induction on the construction
of $\alpha$.
However, we do cases for $\alpha$ and
$\neg \alpha$ together and prove that
for all $\alpha$,
for all $i \geq 0$:
if
$\alpha \in \Delta_i$
then
$(S,R,g), \sigma_{\geq i} \models \alpha$;
and
if
$\neg \alpha \in \Delta_i$
then
$(S,R,g), \sigma_{\geq i} \models \neg \alpha$.

The case by case reasoning is straightforward
given the preceding lemmas.
See the long version for details.

{Case $p$:}
Fix $i \geq 0$.
If $i<N$ let $i'=i$
and otherwise
let $i'=(i-N)\bmod M +N$.
Thus $\sigma_i=i'$.
If
$p \in \Delta_i= \Gamma(x_{j_{i'}})$
then, by definition of $g$,
$p \in g(i')$.
So
$p \in g(\sigma_i)$
and
$(S,R,g), \sigma_{\geq i} \models p$
as required.
If 
$\neg p \in \Delta_i$
then
(by rule CONTRADICTION)
we did not put $p$ in
$g(i')$ and thus
$(S,R,g), \sigma_{\geq i} \models \neg p$.
This follows as no static rules
remove atoms or negated atoms from
labels.

{Case $\neg \neg \alpha$:}
Fix $i \geq 0$.
If 
$\neg \neg \alpha \in \Delta_i$
then
$(S,R,g), \sigma_{\geq i} \models \neg \neg \alpha$
because
$\alpha$ will also have been put in
$\Delta_i$
(by the $\neg \neg$-rule)
and
so by induction
$(S,R,g), \sigma_{\geq i} \models \alpha$.
Note that the $\falsity$-rule also
removes a double negation
from a label set but it immediately
crosses the branch so it is not relevant here.
$\neg \neg \neg \alpha$ is similar.

{Case $\alpha \wedge \beta$:}
Fix $i \geq 0$.
Suppose
$\alpha \wedge \beta \in \Delta_i$.
We know this formula is removed before the next 
TRANSITION (or the end of the branch if that is sooner).
There are two ways for such a conjunction to be removed:
the $\wedge$-rule,
or if the optional $\leftrightarrow$-rule is able to applied
and is used.
In the first case
$(S,R,g), \sigma_{\geq i} \models \alpha \wedge \beta$
because
$\alpha$ and $\beta$ will also have been put in
$\Delta_i$
(by the $\wedge$-rule)
and
so by induction
$(S,R,g), \sigma_{\geq i} \models \alpha$ and
$(S,R,g), \sigma_{\geq i} \models \beta$.

Suppose instead that
$\alpha = \alpha_1 \rightarrow \beta_1$
and
$\beta = \beta_1 \rightarrow \alpha_1$
and the $\leftrightarrow$-rule is used
to remove $\alpha \wedge \beta$.
Then on branch $b$
either
$\alpha_1$ and $\beta_1$ are included and so
in $\Delta_i$ as well,
or their negations are.
In the first case,
by induction we have
$(S,R,g), \sigma_{\geq i} \models \alpha_1$ and
$(S,R,g), \sigma_{\geq i} \models \beta_1$,
and so we also have
$(S,R,g), \sigma_{\geq i} \models \alpha_1 \rightarrow \beta_1$
and
$(S,R,g), \sigma_{\geq i} \models \beta_1 \rightarrow \alpha_1$
as required.
The second negated case is similar.

Suppose
$\neg (\alpha \wedge \beta) \in \Delta_i$.
Again we know this formula is removed 
and we see that
there are four rules that could cause that to happen:
$\neg \wedge$-rule,
$\truth$-rule,
$\vee$-rule
and
$\rightarrow$-rule.

If
$\neg (\alpha \wedge \beta) \in \Delta_i$
is removed by $\neg \wedge$-rule then
$(S,R,g), \sigma_{\geq i} \models \neg (\alpha \wedge \beta)$
because
we will have 
put
$\neg \alpha \in \Delta_i$
or
$\neg \beta \in \Delta_i$
(or one or both of them are already there)
and so by induction
$(S,R,g), \sigma_{\geq i} \models \neg \alpha$
or
$(S,R,g), \sigma_{\geq i} \models \neg \beta$.

$\truth$-rule:
If
$\truth= \neg (\neg p \wedge \neg \neg p) \in \Delta_i$
is removed by $\truth$-rule then
$(S,R,g), \sigma_{\geq i} \models \neg (\neg p \wedge \neg \neg p)$
anyway so we are done.

$\vee$-rule:
If
$\alpha \vee \beta= \neg (\neg \alpha \wedge \neg \beta) \in \Delta_i$
is removed by $\vee$-rule then
$(S,R,g), \sigma_{\geq i} \models \neg (\neg \alpha \wedge \neg \beta)$
because
we will have 
put
$\alpha \in \Delta_i$
or
$\beta \in \Delta_i$
(or one or both of them are already there)
and so by induction
$(S,R,g), \sigma_{\geq i} \models \alpha$
or
$(S,R,g), \sigma_{\geq i} \models \beta$.

$\rightarrow$-rule:
If
$\alpha \rightarrow \beta= \neg (\neg \neg \alpha \wedge \neg \beta) \in \Delta_i$
is removed by $\rightarrow$-rule then
$(S,R,g), \sigma_{\geq i} \models \neg (\neg \neg \alpha \wedge \neg \beta)$
because
we will have 
put
$\neg \alpha \in \Delta_i$
or
$\beta \in \Delta_i$
(or one or both of them are already there)
and so by induction
$(S,R,g), \sigma_{\geq i} \models \neg \alpha$
or
$(S,R,g), \sigma_{\geq i} \models \beta$.

{Case $\alpha U \beta$:}
If
$\alpha U \beta \in \Delta_i$
then
by the $U$-rule, or the optional $F$-rule,
we will have either
put both
$\alpha \in \Delta_i$
and
$X (\alpha U \beta) \in \Delta_i$
or
we will have
$\beta \in \Delta_i$.

Consider the second case.
$(S,R,g), \sigma_{\geq i} \models \beta$ so
$(S,R,g), \sigma_{\geq i} \models \alpha U \beta$
and we are done.

Now consider the first case:
$\alpha U \beta \in \Delta_i$
as well as
$\alpha \in \Delta_i$
and
$X (\alpha U \beta) \in \Delta_i$.
By Lemma~\ref{lem:hl3},
this keeps being true for later $i' \geq i$
until
$\beta \in \Delta_{i'}$.
By induction, for each $i' \geq i$ until then,
$(S,R,g), \sigma_{\geq {i'}} \models \alpha$.
Clearly if we get to a $l > i$ with
$\beta \in \Delta_l$
then 
$(S,R,g), \sigma_{\geq {l}} \models \beta$
and
$(S,R,g), \sigma_{\geq {i}} \models \alpha U \beta$
as required.

If
$\neg (\alpha U \beta) \in \Delta_i$
then $\neg U$-rule
and $G$-rule
mean that
$\neg \beta, \neg \alpha \in \Delta_i$
or
$\neg \beta, X\neg (\alpha U \beta) \in \Delta_i$.

In the first case,
$(S,R,g), \sigma_{\geq i} \models \neg \alpha $
and
$(S,R,g), \sigma_{\geq i} \models \neg \beta$
so
$(S,R,g), \sigma_{\geq i} \models \neg (\alpha U \beta)$
as required.

In the second case we 
can use Lemma~\ref{lem:hl4}
which uses 
an induction 
to show that
$\neg \beta$, $\neg (\alpha U \beta)$, 
$X \neg (\alpha U \beta)$
keep appearing in the $\Delta_{i'}$ labels
forever
or until $\neg \alpha$ also appears.

In either case
$(S,R,g), \sigma_{\geq i} \models \neg (\alpha U \beta)$
as required.

{Case $X \alpha$:}
If
$X \alpha \in \Delta_i$
then, by Lemma~\ref{lem:hl2}
$\alpha \in \Delta_{i+1}$
 so
 by induction
 $(S,R,g), \sigma_{\geq i+1} \models \alpha$
 and
$(S,R,g), \sigma_{\geq i} \models X \alpha$
as required.

$\neg X \alpha$ is similar.
If
$\neg X \alpha \in \Delta_i$
then, by Lemma~\ref{lem:hl2},
$\neg \alpha \in \Delta_{i+1}$
 so
 by induction (because we did $\neg \alpha$ first)
 $(S,R,g), \sigma_{\geq i+1} \models \neg \alpha$
 and
$(S,R,g), \sigma_{\geq i} \models \neg X \alpha$
as required.

And thus ends the soundness proof.
\end{proof}

If we have a successful tableau then the formula
is satisfiable.

Notice that the 
 \prune\ rule plays no part
 in the soundness proof.
 A ticked branch encodes a model
even if the \prune\ rule
is not applied when it could be.

\section{Proof of Completeness:}
\label{sec:complete}

We have to show that
if a formula
has a model then
it has a successful tableau.
This time we will use the model 
to find the tableau.
The basic idea 
is to use a model (of the satisfiable formula)
to show that
{\em in any tableau}
there will be a branch 
(i.e., a leaf) with a tick.

A weaker result is to show that
there is some tableau with a leaf with a tick.
Such a weaker result 
may actually be ok 
to establish correctness
and complexity of the tableau technique.
However, it 
raises questions about whether
a ``no'' answer from a tableau
is correct
and
it does not give clear
guidance for the implementer.
We show the stronger result:
it does not matter which
order static rules are applied.

}


Note that the completeness proof we present here seems to be
entirely novel: 
using a model of a satisfiable formula to
find a successful branch in a tableau,
and having to backtrack in general.
Ideas of relating a tableau branch to a model 
appear in many places (eg. \cite{Wid10})
but here we have to backtrack up the tableau
while continuing in the model
and use a subtle progress argument to show that
we can not do this forever.
An interesting aspect here compared to some other
completeness proofs such as that in
\cite{BrL08}
is that we see that with the new PRUNE rule
there is no need to rely on any ``fair"
expansion strategy amongst eventualities:
being unfair for too long falls fowl
of the PRUNE rules.
Note also that there does not seem 
to be a simpler completeness proof based on the idea
of a
shortest lasso model of a formula and
a claim that the PRUNE rule will not apply.

\shortversion{
Note also that the completeness proof
gives us the rather strong result that
for a satisfiable formula 
we will find a successful tableau
regardless of the order in which we use the rules
and regardless of the order in choosing unfinished
branches to extend.
}

\begin{lemma}[Completeness]
Suppose that $\phi$ is a satisfiable formula
of LTL.
Then any finished tableau for $\phi$ will
be successful.
\end{lemma}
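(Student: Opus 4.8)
The plan is to fix a model $M = (S,R,g)$ and fullpath $\sigma$ with $M, \sigma \models \phi$, and then walk down an arbitrary finished tableau $T$ for $\phi$, maintaining at each poised node a ``pointer'' into the model that makes the node's label true along the corresponding suffix of $\sigma$. Concretely, I would define a notion of a partial branch $b = \langle x_0, \dots, x_k \rangle$ in $T$ being \emph{satisfiable via $\sigma$} if there is a non-decreasing sequence of model positions, incremented exactly at the TRANSITION steps of $b$, such that each $\Gamma(x_i)$ is true at the assigned position. Starting from the root (assigned to position $0$, true since $M,\sigma \models \phi$), I would show that whenever a partial branch is satisfiable via $\sigma$ and its leaf is not yet ticked/crossed, some rule application extends it to a still-satisfiable partial branch: for each static rule the semantics of the connective tells us which child to pick (for $U$, $F$, $\vee$ etc.\ the model's truth value picks the branch); for TRANSITION we advance the model pointer by one, and Lemma~\ref{lem:hl2}-style reasoning guarantees the carried $X$-formulas remain true. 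Since $T$ is finished there are no infinite branches, so this process must terminate at a leaf, and a satisfiable leaf cannot be crossed --- the $\falsity$-, $\neg\truth$-, and CONTRADICTION-rules are ruled out by satisfiability immediately.

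The genuine difficulty is the PRUNE and PRUNE$_0$ rules: a satisfiable partial branch could, if we are unlucky in our model pointer, run into a node where PRUNE forces a cross, so we cannot simply follow the model blindly. The key move is that when this happens we must \emph{backtrack}: we return to the earlier node where the repeated label $\Gamma$ first appeared and re-choose the model pointer there, continuing in the model from a later position --- intuitively, the stretch of $\sigma$ we used to traverse the loop was a detour that postponed the same eventualities, and the PRUNE hypothesis says exactly that the model did not make progress on any eventuality of $\Gamma$ across that stretch either, so we may as well have skipped it. I would make this precise with a progress measure: associate to each satisfiable configuration (node of $T$ together with model pointer) a well-founded rank, something like the lexicographic combination of (distance in $\sigma$ to the next fulfilment of the eventuality in $\Gamma$ that $\sigma$ postpones longest) together with structural measures on the finite closure set; a PRUNE-triggered backtrack strictly decreases this rank because it strictly advances the model past at least one postponed eventuality-fulfilment while returning to a label already seen. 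Since $\sigma$ is an honest model, every eventuality it carries is eventually fulfilled, so this rank cannot decrease forever, and eventually we reach a configuration where no further PRUNE applies and the downward walk reaches a leaf.

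That leaf is reached by a satisfiable partial branch, is not crossed, and is not unfinished (the tableau is finished), so it is ticked --- by EMPTY or by LOOP. Hence $T$ has a successful branch and is successful, which is what we want. I would organise the write-up as: (i) definition of ``satisfiable via $\sigma$'' and the model-pointer bookkeeping; (ii) a one-step extension lemma for the static rules and TRANSITION; (iii) the backtracking construction handling PRUNE/PRUNE$_0$, with the well-founded rank and the proof that it strictly decreases on each backtrack; (iv) the termination argument combining finiteness of $T$ (no infinite branches) with well-foundedness of the rank, concluding that the walk reaches a ticked leaf. The main obstacle, and the part requiring the most care, is step (iii): getting the rank right so that it is insensitive to the harmless static-rule reshuffling (which gives the ``regardless of rule order'' strength) yet strictly drops on every PRUNE backtrack, and checking that a backtrack never undoes progress already banked further up the branch.
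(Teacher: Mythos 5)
Your overall strategy is the paper's: fix a model $\sigma$ of $\phi$, walk down an arbitrary finished tableau maintaining a pointer into $\sigma$ that is advanced exactly at TRANSITION steps and keeps every formula of the current label true at the pointed-to suffix, let the semantics pick the child at each branching static rule, and, when the walk reaches a node crossed by PRUNE, jump back up to an earlier occurrence of the repeated label while the model pointer continues forward. Your steps (i), (ii) and the backtracking move itself match the paper's ``procedure $A$'' and jump-tuple construction, and your observation that satisfiability rules out the contradiction-type crosses is fine.

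The gap is the well-founded rank in step (iv), and it is not a detail --- it is the whole difficulty of the proof. Your rank is essentially $\max_{e}\bigl(d_e(j)-j\bigr)$, the distance from the current pointer $j$ to the next $\sigma$-fulfilment of the worst eventuality $e$ in the repeated label $\Gamma$, and you claim a backtrack decreases it ``because it strictly advances the model past at least one postponed eventuality-fulfilment.'' That justification is backwards. When no fulfilment position is passed during the pruned stretch, every $d_e$ is unchanged while $j$ grows, so the rank does drop; but PRUNE is also permitted to fire when some eventuality $e_1$ \emph{is} fulfilled both in $(u,v]$ and in $(v,w]$ --- its side condition only forbids fulfilments present in $(v,w]$ but absent from $(u,v]$. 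In that case the pointer passes $e_1$'s fulfilment and $d_{e_1}$ jumps to the \emph{next} occurrence, which can be arbitrarily far away: with $e_1$ fulfilled at model positions $2,5,1000$, $e_2$ only at position $10$, and $u,v,w$ at pointers $0,3,6$, PRUNE fires and the max distance goes from $7$ to $994$. Min, sum, and sorted-lexicographic variants increase on the same example, so no simple aggregate of these distances works, and nothing in your sketch supplies a substitute. The paper's argument is global rather than local: assume infinitely many jumps, pick a recurring jump tuple $(u_0,v_0,w_0)$ with $u_0$ minimal among recurring tuples, show each eventuality of $\Gamma(u_0)$ is eventually fulfilled at some node $x_{N_\beta}$ below $u_0$ and that no later jump can go above $x_{N_\beta}$ (such a jump's triple would violate PRUNE's own condition), and conclude that at the next recurrence of the tuple all eventualities of $\Gamma(u_0)$ are fulfilled between $u_0$ and $v_0$ --- so LOOP, which has \emph{priority} over PRUNE, would already have ticked $v_0$, a contradiction. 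That priority of LOOP over PRUNE is the ingredient your argument never invokes, and without it (or the minimal-recurring-tuple device) termination of the backtracking walk remains unproven.
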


\begin{proof}
Suppose that $\phi$ is a satisfiable formula
of LTL.
It will have a model. Choose one, say
$(S,R,g), \sigma \models \phi$.
In what follows we (use standard practice when the
model is fixed and) write
$\sigma_{\geq i} \models \alpha$
when we mean
$(S,R,g), \sigma_{\geq i} \models \alpha$.

Also, build a finished tableau $T$ for
$\phi$ in any manner as long as the rules are followed.
Let $\Gamma(x)$ be the formula
set label on the node $x$ in $T$.
We will show that 
$T$ has a ticked leaf.

To do this we will
construct a sequence
$x_0, x_1, x_2, ....$
of nodes, with $x_0$ being the root.
This sequence may terminate at a tick
(and then we have succeeded)
or it may hypothetically go on forever
(and more on that later).
In general, the sequence 
will head downwards
from a parent to a child node
but 
occasionally it may 
jump back up to an ancestor.

As we go we will
make sure that each
node $x_i$ is associated with
an index $J(i)$ along the fullpath $\sigma$
and we guarantee the following invariant
$INV(x_i,J(i))$ for each $i \geq 0$.
The relationship
$INV(x,j)$ is that
for each $\alpha \in \Gamma(x)$,
$ \sigma_{\geq j} \models \alpha$.

Start by putting $J(0)=0$
when $x_0$ is the tableau root node.
Note that
the only formula in
$\Gamma(x_0)$ is $\phi$ and
that 
$ \sigma_{\geq 0} \models \phi$.
Thus $INV(x_0,J(0))$ holds at the start.

Now suppose that we have identified the $x$ sequence up until
$x_{i}$.
Consider the rule that is used in $T$ to
extend a tableau branch from $x_i$ to some
children.
Note that we can also ignore the 
cases in which the
rule is EMPTY or LOOP
because they would immediately give us the
ticked branch that is sought.

It is useful to define the sequence advancement procedure
in the cases apart from the PRUNE rule separately.
Thus we now describe a procedure,
call it $A$,
that is given a node $x$ and index $j$
satisfying
$INV(x,j)$
and, in case that the node $x$ has children via 
any rule except PRUNE,
the procedure $A$ will give us
a child node $x'$ and index $j'$
which is either $j$ or $j+1$,
such that $INV(x',j')$ holds.
The idea will be to use procedure $A$
on $x_i$ and $J(i)$
to get $x_{i+1}$ and $J(i+1)$
in case the PRUNE rule is 
not used at node $x_i$.
We return to deal with advancing from $x_i$
in case that the PRUNE rule is used later.
So now we describe procedure $A$
with $INV(x,j)$ assumed.


In this conference paper only the
most interesting rules are shown:
see \cite{\techreport} for the rest.

[EMPTY]
If $\Gamma(x)= \{ \}$
then
we are done. $T$ is a successful tableau as required.

[CONTRADICTION]
Consider if it is possible for us to reach
a leaf at $x$ with a cross
because of a contradiction.
So there is some $\alpha$ with
$\alpha$ and $\neg \alpha$ in $\Gamma(x)$.
But this can not happen as then
$ \sigma_{\geq j} \models \alpha$
and
$ \sigma_{\geq j} \models \neg \alpha$.

[$U$-rule]
So $\Gamma(x)= \Delta \cupdot \{ \alpha U \beta \}$
and there are two children.
One $y$ is labelled
$\Gamma(y)= \Delta \cup \{ \beta \}$
 and
the other, $z$,
is labelled
$\Gamma(z)= \Delta \cup \{ \alpha, X(\alpha U \beta) \}$.
We know
$ \sigma_{\geq j} \models \alpha U \beta$.
Thus,
there is some $k \geq j$ such that
$\sigma_{\geq k} \models \beta$
and
for all $l$,
if
$j \leq l < k$
then
$ \sigma_{\geq l} \models \alpha$.
If 
$ \sigma_{\geq j} \models \beta$
then 
we can choose $k=j$
(even if other choices as possible)
and otherwise
choose any such $k > j$.
Again there are two cases,
either $k=j$
or $k> j$.

In the first case, when 
$ \sigma_{\geq j} \models \beta$,
we put
$x'=y$
and otherwise we will
make
$x'=z$.
In either case put
$j'=j$.

Let us check the invariant.
Consider the first case.
We have
$ \sigma_{\geq j'} \models \beta$.

In the second case,
we know that we have
$ \sigma_{\geq j'} \models \alpha$
and
$ \sigma_{\geq j'+1} \models \alpha U \beta$.
Thus
$ \sigma_{\geq j'} \models X( \alpha U \beta)$.

Also, in either case,
for every other $\gamma \in \Gamma(x')$
we still have
$ \sigma_{\geq j'} \models \gamma$.
So we have the invariant holding.

[$\neg U$-rule]
So $\Gamma(x) = \Delta \cupdot \{ \neg( \alpha U \beta)\}$
and there are two children.
One $y$ is labelled
$\Delta \cup \{ \neg \alpha, \neg \beta \}$
 and
the other, $z$,
is labelled
$\Delta \cup \{ \neg \beta, X\neg (\alpha U \beta) \}$.
We know
$ \sigma_{\geq j} \models \neg (\alpha U \beta)$.
So for sure
$ \sigma_{\geq j} \models \neg \beta$.

Furthermore,
possibly
$ \sigma_{\geq j} \models \neg \alpha$
as well,
but otherwise
if
$ \sigma_{\geq j} \models \alpha$
then we can show that we can not have
$ \sigma_{\geq j+1} \models \alpha U \beta$.
Suppose for contradiction
that
$ \sigma_{\geq j} \models \alpha$
and
$\sigma_{\geq j+1} \models \alpha U \beta$.
Then 
there is some $k \geq j$ such that
$ \sigma_{\geq k} \models \beta$
and
for all $l$,
if
$j \leq l < k$
then
$ \sigma_{\geq l} \models \alpha$.
Thus
$\sigma_{\geq j} \models \alpha U \beta$.
Contradiction.

So we can conclude that there
are two cases when the $\neg U$-rule is used.
CASE 1:
$ \sigma_{\geq j} \models \neg \beta$
and
$ \sigma_{\geq j} \models \neg \alpha$.
CASE 2:
$ \sigma_{\geq j} \models \neg \beta$
and
$ \sigma_{\geq j+1} \models \neg (\alpha U \beta)$.

In the first case, when 
$\sigma_{\geq j} \models \neg \beta$,
we put
$x'=y$
and otherwise we will
make
$x'=z$.
In either case put
$j'=j$.

Let us check the invariant.
In both cases
we know that we have
$ \sigma_{\geq j'} \models \neg \beta$.
Now consider the first case.
We also have 
$ \sigma_{\geq j} \models \neg \alpha$.
In the second case,
we know that we have
$ \sigma_{\geq j+1} \models \neg (\alpha U \beta)$.
Thus
$ \sigma_{\geq j'} \models X\neg (\alpha U \beta)$.
Also, in either case,
for every other $\gamma \in \Gamma(x')$
we still have
$ \sigma_{\geq j'} \models \gamma$.
So we have the invariant holding.

[OTHER STATIC RULES]: similar.

{
[TRANSITION]}
So $\Gamma(x)$ is poised
and there is one child,
which we will make
$x'$
and we will put
$j'=j+1$.

Consider a formula
 $\gamma \in \Gamma(x')=
\{ \alpha |
X \alpha \in \Gamma(x) \}
\cup
\{ \neg \alpha |
\neg X \alpha \in \Gamma(x) \}$.

CASE 1: Say that
$X \gamma \in \Gamma(x)$.
Thus, by the invariant,
$\sigma_{\geq j} \models X \gamma$.
Hence,
$\sigma_{\geq j+1} \models \gamma$.
But this is just
$\sigma_{\geq j'} \models \gamma$
as required.

CASE 2: Say that $\gamma= \neg \delta$
and
$\neg X \delta \in \Gamma(x)$.
Thus, by the invariant,
$\sigma_{\geq j} \models \neg X \delta$.
Hence,
$\sigma_{\geq j+1} \not \models \delta$.
But this is just
$\sigma_{\geq j'} \models \gamma$
as required.

So we have the invariant holding.

{
[LOOP]}
If, in $T$, the node $x_i$
is a leaf just getting a tick via the 
LOOP rule
then
we are done. 
$T$ is a successful tableau as required.

So that ends the description of procedure $A$
that is given a node $x$ and index $j$
satisfying
$INV(x,j)$
and, in case that the node $x$ has children via 
any rule except PRUNE (or \prunez)
the procedure $A$ will give us
a child node $x'$ and index $j'$,
which is either $j$ or $j+1$,
such that $INV(x',j')$ holds.
We use procedure $A$ to
construct a sequence
$x_0, x_1, x_2, ....$
of nodes, with $x_0$ being the root.
and guarantee the invariant
$INV(x_i,J(i))$ for each $i \geq 0$.

The idea will be to use procedure $A$
on $x_i$ and $J(i)$
to get $x_{i+1}$ and $J(i+1)$
in case the PRUNE rule is 
not used at node $x_i$.
Start by putting $J(0)=0$
when $x_0$ is the tableau root node.
We have seen that $INV(x_0,J(0))$ holds at the start.

In the conference paper version of the 
rest of the proof we ignore the
\prunez\ rule.
It can be dealt with in a similar way.

{[\prune\ ]}
Now, we complete the description of the construction
of the $x_i$ sequence by 
explaining what to do 
in case
$x_i$ is a node on which PRUNE
is used.
Suppose that
$x_i$
is a node which gets a cross
in $T$
via the \prune\ rule.
So there is a sequence
$u=x_h, x_{h+1}, ...,
x_{h+a}=v, x_{h+a+1}, ...,
x_{h+a+b}=x_i=w$
such that
$\Gamma(u)=\Gamma(v)=\Gamma(w)$
and no extra eventualities of
$\Gamma(u)$ are satisfied between
$v$ and $w$ that were
not already satisfied
between $u$ and $v$.

What we do now is to 
undertake a sort of backtracking exercise
in our proof.
We
choose some such
$u$, $v$ and $w$,
there may be more than one triple,
and
proceed with the construction
as if 
$x_i$ was $v$ instead of $w$.
That is we use
the procedure
$A$
on $v$ with $J(i)$
 to get from $v$ to
 one $x_{i+1}$ of its children
 and define $J(i+1)$.
Procedure $A$ above 
can be applied because
$\Gamma(v)=\Gamma(x_i)$
and so 
$INV(v,J(i))$ holds.
The procedure $A$ gives us 
$INV(x_{i+1},J(i+1))$ as well.

If we find $x_{i+1}$ from $x_i$ 
in this way when $x_i$ is a 
tableau node on which PRUNE is applied
then
we say that
$x_{i+1}$ is obtained from $x_i$
via the {\em jump tuple} $(u,v,w)$.

Thus we keep going
with the new $x_{i+1}$ child of $v$,
and $J(i)$.
We have made the subtle move of
backtracking up the tableau to find our
$x_{i+1}$ but
simultaneously continued progress
to $J(i+1)$
without backtracking 
in our journey through the model.

That ends our complete description
of how to find the matching
sequences of $x_i$ and $J(i)$.
We have seen that 
the sequences either finish at a ticked node of the tableau
or can go on another step.

To finish the proof
we need to consider 
whether the above construction
goes on for ever.
Clearly it may
end finitely with
us 
finding a ticked leaf and
succeeding.
However,
at least in theory,
it may seem possible that the
construction keeps going forever
even though the tableau will be 
finite.
The rest of the proof
is to show that this 
actually can not happen.
The construction can not go on forever.
It must stop and
the only way that we have 
shown that that can happen is
by finding a tick.

Suppose for contradiction that the construction does go on forever.
Thus, because there are only a finite number of nodes in the tableau,
and because procedure $A$
defines $x_{i+1}$ as a child of $x_i$,
then
we must meet the \prune\ 
rule and
jump back up the tableau infinitely often.

\newcommand{\triple}{{tuple}}

There are only a finite number
of nodes in $T$ so
only a finite number of jump \triple s
so there must be some that
are used to obtain $x_{i+1}$
for infinitely many $i$.
Call these {\em recurring}
jump \triple s.

Say that
$(u_0,v_0, w_0)$ 
is
one such recurring tuple.
Choose $u_0$ so that 
for no other recurring jump triple
$(u_1,  v_1, w_1)$
do we have
$u_1$ being a proper ancestor of $u_0$.

As we proceed through the construction
of $x_0, x_1, ..$
and see a jump every so often,
eventually all the 
jump \triple s
which only cause a jump a finite number
of times
stop causing jumps.
After that index, $Z$ say,
$(u_0,v_0, w_0)$
will still cause a jump
every so often.

Thus
$u_0$ will never appear
again as $x_i$ 
for $i >Z$
and all $x_i$ for $i>Z$ that we choose will
be descendants of $u_0$.
This is because by choice of $u_0$ we will never jump
up to $u_0$ or above it (closer to the root)
via any jump tuple that is used after $Z$.
Say that $x_N$ is the very last $x_i$ that
is equal to $u_0$.

Now consider any $X(\alpha U \beta)$ that appears
in $\Gamma(u_0)$.
(There must be at least one eventuality
in $\Gamma(u_0)$ as it is used to
apply rule \prune\
and not the LOOP rule).

A simple induction shows that
$\alpha U \beta$
or $X( \alpha U \beta)$
will appear in every
$\Gamma(x_i)$
from $i=N$
up until at least when
$\beta$ appears
in some $\Gamma(x_i)$ for $i>N$
(if that ever happens).
This is because
if $\alpha U \beta$
is in $\Gamma(x_i)$
then it will also be in any child node
unless the UNTIL rule is used.
If the UNTIL rule is used on $x_i$
and $\beta$ is not in $\Gamma(x_i)$
and does not get put in $\Gamma(x_{i+1})$
then
$X(\alpha U \beta)$ will be put
in $\Gamma(x_{i+1})$.
The subsequent temporal TRANSITION rule
will thus put $\alpha U \beta$
into the new label.
Finally, in case the $x_i$ sequence meets a PRUNE jump
$(u,v,w)$
then the new $x_{i+1}$ will be a child 
of $v$ which is a descendent of $u$
which is a descendent of $u_0$
so will also contain 
$\alpha U \beta$
or $X(\alpha U \beta)$.

Now $J(i)$ just
increases by $0$ or $1$ 
with each increment of $i$,
We also know that
$\sigma_{\geq J(i)} \models \alpha U \beta$
from $i=N$ onwards
until (and if) $\beta$ gets put in $\Gamma(x_i)$.
Since $\sigma$ is a fullpath
we will eventually get to some
$i$
with
$\sigma_{\geq J(i)} \models \beta$.
In that case our
construction makes us put
$\beta$ in the label (to keep the invariant holding).
Thus we do eventually get to some
$i \geq N$
with $\beta \in \Gamma(x_i)$.
Let $N_\beta$ be the first
such $i \geq N$.
Note that
all the nodes between
$u_0$ and
$x_{N_\beta}$
in the tableau
also appear as
$x_i$ for 
$N < i < N_\beta$
so that
they all have
$\alpha U \beta$ and not $\beta$
in their labels
$\Gamma(x_i)$.

Now let us consider if 
we ever
jump up above 
$x_{N_\beta}$
at any TRANSITION of our construction
(after $i=N_\beta$).
In that case 
there would be a PRUNE jump triple
of tableau nodes
$u$, $v$ and $w$
governing the first such jump.
Since $u$ is not above
$u_0$
and $v$ is above 
$x_{N_\beta}$,
we must have 
$\Gamma(u) = \Gamma(v)$
with 
$X(\alpha U \beta)$
in them
and $\beta$ not satisfied in between.
But $w$ will be below
$x_{N_\beta}$
at the first such jump,
meaning that
$\beta$ is satisfied
between
$v$ and $w$.
That is
a contradiction to the PRUNE rule being applicable to this triple.

Thus the $x_i$ sequence 
stays within descendants of $x_{N_\beta}$
forever after $N_\beta$.

The above reasoning applies to all
eventualities in $\Gamma(u_0)$.
Thus, after they are each satisfied,
the construction $x_i$
does not jump up above any of them.
When the next supposed
jump involving
$u_0$ with some $v$ and
 $w$
happens after that
it is clear that
all of the eventualities
in $\Gamma(u_0)$
are satisfied above $v$.
Thus the LOOP rule would have applied 
between $u$ and $v$.

This is a contradiction
to such a jump ever happening.
Thus we can conclude that
there are not an infinite number of jumps after all.
The construction must finish with a tick.
This is the end of the completeness proof.
\end{proof}

\section{Implementations}
\label{sec:complex}



Tableau search here,
even in a non-parallel implementation,
should (theoretically) be able to be implemented to run
a little faster than that 
the state of the art tableau technique
of \cite{Schwe98}.
This is because
there is less information to keep track of and
no backtracking from potentially successful branches
when a repeated label is discovered.

A fast implementation of the new tableau
written by
Matteo Battelo
of Udine University
is available from 
\url{https://github.com/Corralx/leviathan}
and described in
\cite{\ijcai}.
Experiments run using this implementation,
on the full set of 3723 standard benchmark formulas
\cite{VSchuppanLDarmawan-ATVA-2011},
show comparative speed performance with
five state of the art tools 
(Aalta \cite{LiY2014}, 
TRP++ \cite{hustadt2003trp++}, 
LS4 \cite{SuW2012}, 
NuSMV \cite{Cimatti2002}, 
and 
PLTL) based
on automata, resolution, resolution(again), symbolic model checking,
the Schwendimann tableau technique respectively. 
Interestingly the memory usage for the
new tableau is significantly less.
See \cite{\ijcai}
for the details
of the implementation and experiments.

For now, 
as this current paper is primarily about the
theory behind the new rules,
we have provided
a demonstration Java implementation
to allow readers to experiment with
the way that the tableau works.
The program allows comparison with 
a corresponding implementation of
the Schwendimann
tableau.
The demonstration Java implementation
is available at
\webpage.
This allows users to understand the tableau building process
in a step by step way.
It is not 
designed as a fast implementation.
However, it does report on
how many tableau construction steps were taken.

Detailed comparisons of the running times
across 24 typical benchmark formulas
are available in \cite{\techreport}.
In Figure~\ref{fig:compare},
we give a small selection to give the
idea of the experimental results.
This is just on two quite long formulas,
``Rozier 6'' and ``Rozier 9"
and one very long formula
``anzu amba amba 6"
from the so-called Rozier counter example series
of \cite{VSchuppanLDarmawan-ATVA-2011}.
Shown is formula length,
running time in seconds (on a standard laptop),
number of tableau steps and
the maximum depth of a branch in
poised states.
As claimed,
the new tableau needs roughly the same number
of steps but saves time
on each step (at least as the formulas get longer).
Indeed there are only three formulas
presented but they are each part of important
series which each show the 
same pattern:
the two approaches are comparable in
terms of number of steps taken.
Future heavy duty implementation
should then be able to deliver
on a faster implementation for the new approach as
we know each step is simpler than
a step for the Schwendimann approach.

\begin{figure}
\begin{center}
$\;$\\
\begin{tabular}{|c|c|c|c|c|c|c|c|}
\hline
\multicolumn{2}{|c|}{fmla} &  \multicolumn{3}{c|}{Reynolds} & \multicolumn{3}{c|}{Schwendimann}\\
\hline
& length & sec & steps & depth & sec & steps & depth \\
\hline
r6 & 190 & 0.871 & 20k & 353 & 0.855 & 20k & 353 \\
r9 & 277 & 113 & 239k & 4353 & 120 & 242k & 4353 \\
as6 & 1864 & 0.003 & 54k & 2 & 0.015 & 61k & 2 \\
\hline
\end{tabular}
\end{center}
\caption{Comparison of the two tableaux from the Java implementation}
\label{fig:compare}
\end{figure}

\section{Example}
\label{sec:example}



In order to illustrate the differences in approach
very briefly we include two diagrams from the
long paper.
See Figure~\ref{fig:eg18} and  \ref{fig:eg19} for the overall shape
of the two tableaux applied to 
the example
\[
\theta= 
a
\wedge 
G(a \leftrightarrow X \neg a)
\wedge GF b_1 
\wedge GF b_2 
\wedge G( b_1  \rightarrow  \neg a)
\wedge G( b_2    \rightarrow   \neg a)
\wedge G \neg( b_1 \wedge b_2 ).
\]
Roughly, there are two eventualities which need respectively state $1$ and $3$
to fulfil them but we must return to state $0$ in between.
There are some similar observations
on the Schwendiman approach in
\cite{How97}.

\begin{figure}
\centering
\begin{minipage}{.5\textwidth}
  \centering
\includegraphics[width=4cm,trim= 4.5cm 12cm 4.5cm 10cm,clip=true]{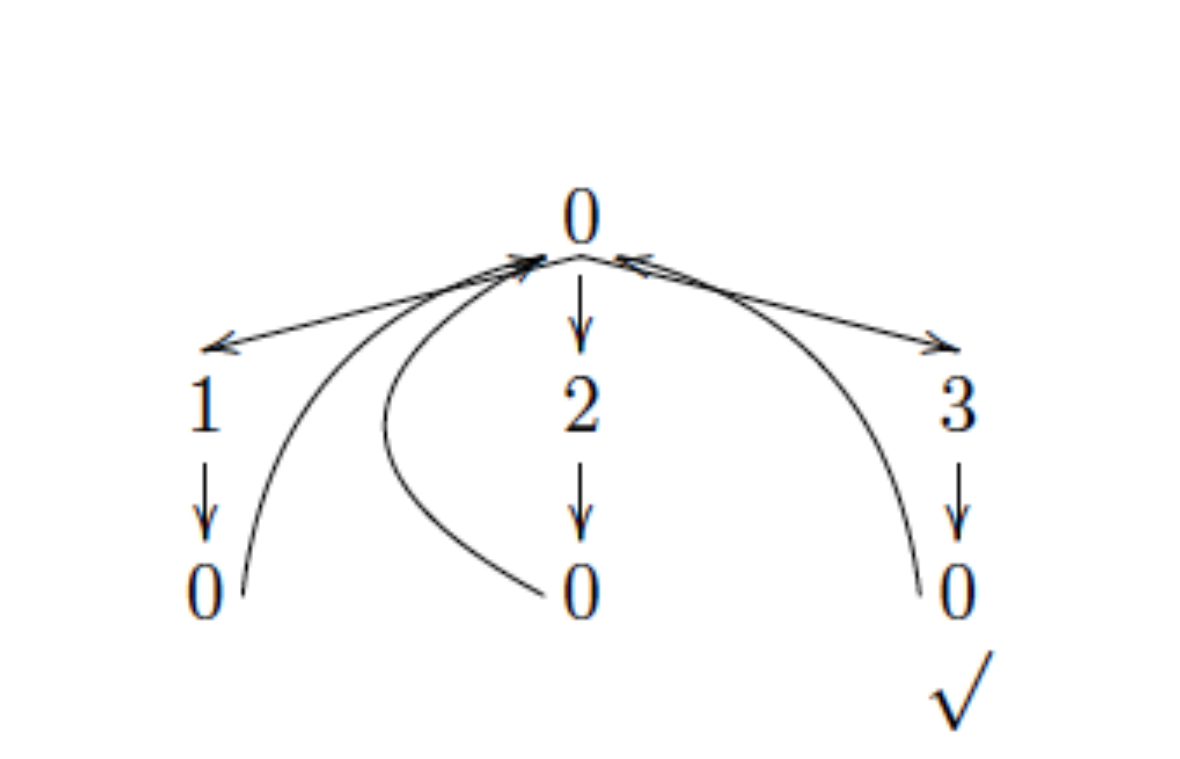}
  \captionof{figure}{Schwendimann Example}
  \label{fig:eg18}
\end{minipage}%
\begin{minipage}{.5\textwidth}
  \centering
\includegraphics[width=4cm,trim= 4.5cm 10cm 4.5cm 10cm,clip=true]{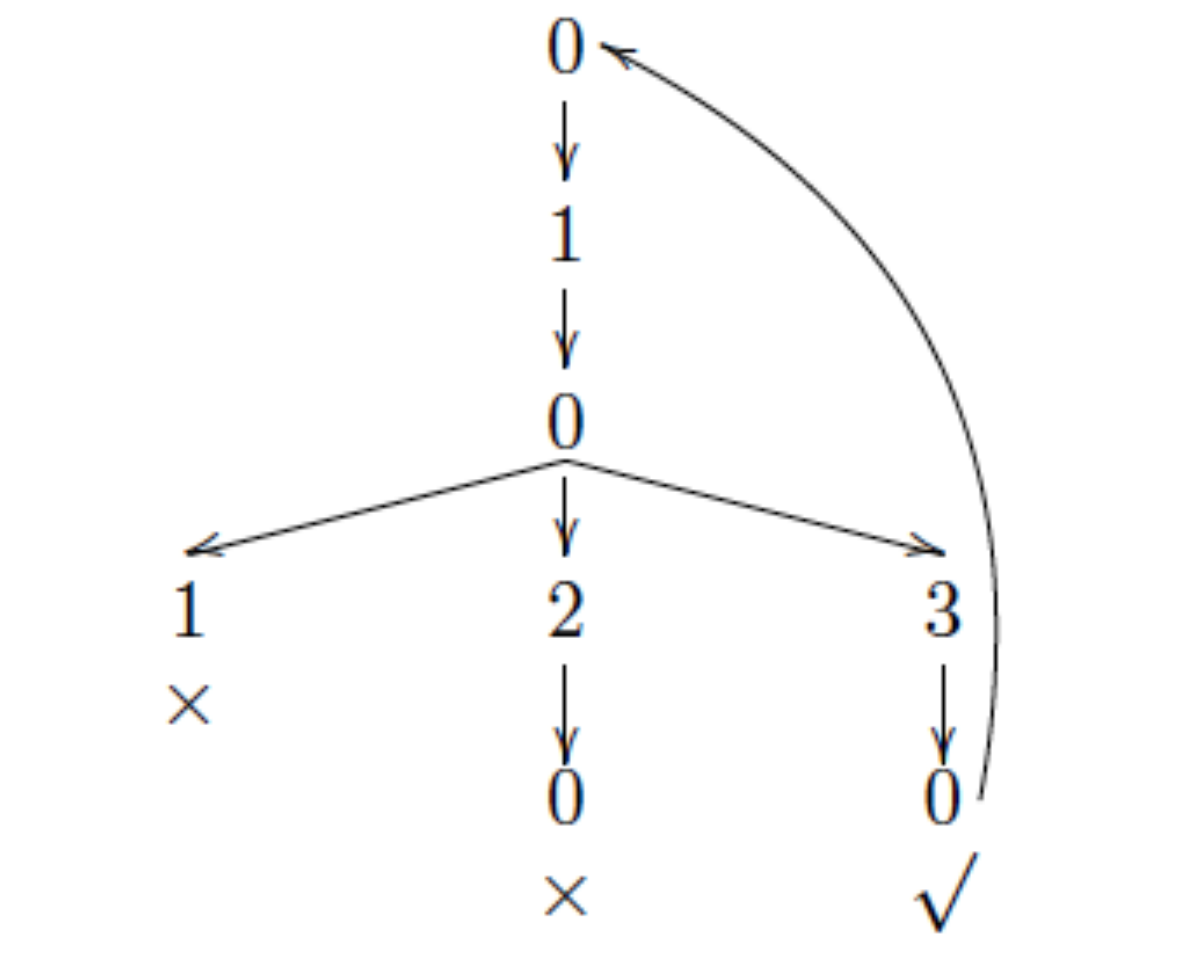}
  \captionof{figure}{Same Example with new tableau}
  \label{fig:eg19}
\end{minipage}
\end{figure}


\section{Conclusion}
\label{sec:concl}

We have introduced a new
type of tableau rule for temporal logics,
in particular for LTL.
This allows the operation of a novel but traditionally
tree-shaped, one-pass tableau system
for LTLSAT.
It is simple in all aspects
with no extra notations on nodes,
neat to introduce to students,
amenable to manual use and
can be implemented
efficiently 
with competitive performance.

In searching or constructing the tableau
one can explore down branches completely independently
and further break up the search down individual
branches into separate somewhat independent
processes.
Thus it is particularly suited to parallel implementations.

Because of the simplicity,
it also seems to be a good base for more intelligent
and sophisticated algorithms:
including heuristics for choosing amongst
branches and ways of managing sequences
of label sets.
The idea of the PRUNE rules
potentially have many other applications.

\bibliographystyle{eptcs}

\end{document}